\documentclass[journal,comsoc]{IEEEtran}
\usepackage{amsmath,graphics,amssymb,epsfig,subfigure,color,amsthm,cite,float}
\usepackage{array}
\usepackage{multirow}
\usepackage{enumerate}
\usepackage{makecell}
\usepackage{tabularx} 
\usepackage{algorithm,algcompatible}
\algnewcommand\INPUT{\item[\textbf{Input:}]}%
\algnewcommand\OUTPUT{\item[\textbf{Output:}]}%

\newtheorem{thm}{Theorem}
\newtheorem{lem}{Lemma}

\begin{document}

\title{Orthogonal Sparse Superposition Codes for Ultra-Reliable Low-Latency Communications}

\author{Yunseo Nam,~\IEEEmembership{Student Member,~IEEE}, \IEEEauthorblockN{Jeonghun Park},~\IEEEmembership{Member,~IEEE},\\~ \IEEEauthorblockN{Songnam Hong},~\IEEEmembership{Member,~IEEE}, and \IEEEauthorblockN{Namyoon~Lee},~\IEEEmembership{Senior Member,~IEEE}
  \thanks{Y. Nam and N. Lee are with the Department of Electrical Engineering, POSTECH, Pohang, Gyeongbuk 37673, South Korea  (e-mail: \{edwin, nylee\}@postech.ac.kr). J. Park is with the Department of Electrical Engineering, Kyungpook National University, Daegu, South Korea (e-mail:jeonghun.park@knu.ac.kr).  S. Hong is with the Department of Electrical Engineering, Hanyang University, Seoul, South Korea (e-mail: snhong@hanyang.ac.kr).}
	}

\maketitle

\setlength\arraycolsep{2pt}
\makeatletter
\newcommand{\vast}{\bBigg@{3.5}}
\newcommand{\Vast}{\bBigg@{4.5}}
\makeatother
\vspace{-12mm}

\begin{abstract}
This paper presents a new class of sparse superposition codes for low-rates and short-packet communications over the additive white Gaussian noise channel. The new codes are orthogonal sparse superposition (OSS) codes. The key idea of the OSS codes is to construct a codeword as a superposition of sparse sub-codewords whose support sets are mutually non-overlapping. To construct such codewords in a computationally efficient manner,  a successive encoding method is presented. Harnessing the orthogonal property among sub-codewords, a simple yet near-optimal decoding method is proposed, which performs element-wise maximum a posterior decoding with successive support set cancellation.  This decoder is super-fast by a linear decoding complexity in block lengths, far less than the commercially used channel decoders for modern channel codes. The upper bounds for the block error rates (BLERs) are analytically derived for a few-layered OSS codes as a function of block lengths and code rates. It turns out that a single-layered OSS code achieves the ultimate Shannon limit in the power-limited regime, even with the linear complexity decoder.  Via simulations, the proposed OSS codes are shown to perform better than commercially used coded modulation techniques for low-rate and short-latency communication scenarios.


\begin{IEEEkeywords}
Sparse superposition codes, short-packet transmissions, low-rate codes.
\end{IEEEkeywords}
\end{abstract}

\section{Introduction}

\subsection{Motivation}
The code design in the low-rate and the short-block length regime is particularly significant for ultra-reliable low-latency communications (URLLC) to enable Internet-of-Things (IoT) and enhanced Machine-Type Communications (eMTC) services for 5G and beyond \cite{Durisi,Popovski2014,URLLC, URLLCapp,Short,Short2,NBIoT}. Currently, Long-Term Evolution (LTE) systems have used the low-rate codes by concatenating a powerful moderate-rate code (e.g., Turbo and LDPC codes) with a simple repetition code. For instance, the Narrow-Band IoT standard allows up to 2048 repetitions of a turbo code with rate 1/3 to meet the maximum coverage requirement, in which the effective code rate approximately becomes $R=\frac{1}{3\times 2048}\simeq 1.6\times 10^{-4}$ \cite{NBIoT}. Although this simple code construction method provides mediocre performances with a reasonable decoding complexity in practice, the code performance and decoding complexity might not be sufficient for extremely low-rate and low-latency communication scenarios. From a coding theoretical perspective, finding efficient codes in this regime is a significant yet challenging problem. In this paper, we present a new class of sparse regression codes, which is particularly efficient for extremely low-rate and low-latency communications.




\subsection{Related Work}

Sparse superposition codes (SPARC) is a joint modulation and coding technique initially introduced by Joseph and Barron \cite{SSC_2012}. Unlike the traditional coded modulation techniques \cite{Forney1998,BICM}, a codeword of SPARCs is constructed by the direct multiplication of a dictionary matrix and a sparse message vector under a block sparsity constraint. Using the Gaussian random dictionary matrix with independent and identically distributed (IID) entries for encoding and the optimal maximum likelihood (ML) decoding, SPARC is shown to achieve any fixed-rate smaller than the capacity of Gaussian channels as the code length goes to infinity \cite{SSC_2012}. 

Designing low-complex decoding algorithms for SPARCs is of great interest to make the codes feasible in practice \cite{FSSC,Rush,Cho,Barbier,Greig}. The adaptive successive decoding method and its variations have made a significant progress in this direction \cite{FSSC,Cho}. By interpreting the decoding problem of the SPARC with the $L$ sections as a multi-user detection problem Gaussian multiple-access channel with $L$ users under a total sum-power constraint, the idea of adaptive successive decoding is to exploit both the successive interference cancellation at the decoder with a proper power allocation strategy at the encoder. Decoding algorithms using compressed sensing have also received significant attention as alternatives of the computationally-efficient decoders by a deep connection between SPARCs and compressed sensing \cite{CandesRombergTao2006}. In principle, the decoding problem of SPARCs can be interpreted with a lens through a sparse signal recovery problem from noisy measurements under a certain sparsity structure. Exploiting this connection, approximate message passing (AMP) \cite{AMP}, successfully used in the sparse recovery problem, is proposed as a computationally-efficient decoding method of SPARCs \cite{Barbier,Rush}. One key feature of the AMP decoder is that the decoding performance per iteration can be analyzed by the state evolution property \cite{AMP}. Although both low-complexity decoders can decay the error probability with a near-exponential order in the code length as long as a fixed code rate is below the capacity, the finite length performance of the AMP decoder is much better than that of the adaptive successive hard-decision decoder. However, the performance of SPARCs with such low-complexity decoders is limited in both a very low-rate and a short-block length regime, in which a transmitter sends a few tens of information bits to a receiver using a few hundreds of the channel uses, i.e., $C\leq 0.05$.


 The performances SPARCs in the low-rate and short-length regime can be improved by carefully designing their dictionary matrices with a finite size \cite{Candes2008,Calderbank}. Finding the optimal dictionary matrix for given code rates and block lengths are very challenging tasks. To avoid this difficulty, the common approach in designing the dictionary matrix is to exploit well-known orthogonal matrices. For instance, the use of the Hadamard-based dictionary matrix is shown to provide better performances than that of the IID Gaussian random dictionary matrices in a finite-block length regime \cite{Greig}. The quasi-orthogonal sparse superposition code  (QO-SSC) is another example, in which Zadoff-Chu sequences are harnessed to construct a dictionary matrix ensuring the near-orthogonal property. Interestingly, it performs better than polar codes \cite{Polar} under some short-block length regimes. However, constructing such a near-orthogonal dictionary matrix per code rate and block length requires a high computational complexity. In addition, the decoding complexity and latency of the iterative decoder, called belief propagation successive interference cancellation (BPSIC), cannot meet the stringent requirements of no error floor performance in URLLC.





\subsection{Contributions}
In this paper, we consider a communication over a Gaussian channel in a power-limited regime, in which the capacity $C$ approaches zero as the block length goes to infinity. For an efficient communication in this regime, we introduce a new class of SPARCs, called orthogonal sparse superposition (OSS) codes. The key innovation of the code is to construct a codeword as a superposition of sparse sub-codewords whose support sets are mutually non-overlapping. The major contributions are summarized as follows:

 \begin{itemize}
    \item We first present an encoding method called \textit{successive encoding}. The key idea of the successive encoding is to sequentially select the non-zero supports of all sub-message vectors in such a way that they are mutually exclusive. This successive selection guarantees the orthogonality between all sub-codewords.  The proposed encoding scheme can construct codebooks with flexible rates for a given block length by appropriately choosing the code parameters including the number of sub-messages, sparsity levels per sub-message, and the non-zero alphabets of it.  The proposed OSS code has a number of intriguing aspects. Not only classical permutation modulation codes introduced in 1960' \cite{Slepian} and recently introduced index  modulation techniques \cite{Basar2013,Renzo2011,Lee_TSP} can be interpreted as special cases of the OSS code. In addition, we observe that the nominal coding gain of OSS codes are identical with that of the bi-orthogonal codes \cite{Forney1998}.
        
    \item We also propose a low-complexity decoder while achieving a near-optimal decoding performance. Thanks to the orthogonality between sub-codewords, the idea of the proposed decoder is to perform element-wise maximum a posterior decoding with successive support set cancellation (E-MAP-SSC) using the Bayesian principle \cite{BMP}. The proposed decoder is super-fast because it has linear complexity with a block length. In particular, for a two-layered OSS code, the proposed decoder is equivalent to a simple ordered statistics decoder, while achieving the optimal decoding performance.

    \item We analyze the performance of the proposed encoder and decoder. We first derive an analytical expression of the block error rates (BLERs) when using a two-layered OSS code with a simple ordered statistics decoder in terms of relevant system parameters, chiefly the number of information bits and the code block length $N$. This analytical expression is particularly useful when predicting the minimum required signal-to-noise-ratio (SNR) to achieve the BLER below $10^{-9}$, which is very hard to obtain the performance even with simulations. More importantly, we show that a simple OSS code with a linear complexity decoder can achieve the ultimate Shannon limit in the power-limited regime. This result is remarkable because the Shannon limit in this regime is achievable with the classical bi-orthogonal code using a low-complexity decoder with the Hadamard transform, which requires super-linear decoding complexity, i.e., $\mathcal{O}(N\log N)$. Therefore, the decoding complexity can be reduced by a factor of $\log N$ while attaining the optimal performance in the power-limited regime.  


    \item We verify the exactness of our analytical BLER expressions by comparing them with numerical results for various code rates. In addition, we compare the performance of the proposed OSS code with the polar code in terms of both BLERs and the finite blocklength achievable rates for a given BLER \cite{Polyanskiy}. From this comparison, we observe that our OSS codes with a linear complexity decoder outperform polar codes using the successive cancellation decoder (i.e., a super-linear complexity decoder) in the low-rate and short-block length regime.

\end{itemize}

The rest of this paper is organized as follows. First, we introduce the OSS codes and present the successive encoding method in Section II. Then, Section III provides the low-complexity decoding algorithm for OSS codes called E-MAP-SSC. In Section IV, we analyze the performance of the proposed OSS codes. Simulation results are provided in Section V. Finally, Section VI concludes the paper with some discussion and possible extensions.

 \section{Orthogonal Sparse Superposition Coding}
In this section, we present a novel encoding strategy called successive orthogonal encoding to construct orthogonal sparse superposition codes. To provide a better understanding of the codes, we explain the properties and remarks of the codes.

\subsection{Preliminary}
  Before presenting the code construction idea, we introduce some notations and definitions. 
  
  \vspace{0.1cm}
{\bf AWGN channel:} 
 We consider a transmission of codeword over the AWGN channel. Let $N\in \mathbb{Z}^{+}$ be the block length and $R\in\mathbb{R}^{+}$ be the rate of a code. Then, from a codeword ${\bf c}\in\mathbb{R}^N$ in a codebook $\mathcal{C}\in\mathbb{R}^{N\times2^{NR}}$, the received vector ${\bf y}\in\mathbb{R}^N$ is obtained as
\begin{align}
    {\bf y}={\bf c}+{\bf v},
\end{align}
where ${\bf v}\in\mathbb{R}^N$ is a Gaussian noise vector distributed as $\mathcal{N}({\bf 0},\sigma^2{\bf I})$.
  
  \vspace{0.1cm}
{\bf Dictionary and subcodeword:} 
We let ${\bf x}_{\ell}\in\mathbb{R}^N$ be the $\ell$th sparse message vector with sparsity level of $\|{\bf x}_{\ell}\|=K_{\ell}$ for $\ell\in [L]$. We also define an orthogonal dictionary matrix ${\bf U}\in \mathbb{R}^{N}$, i.e., ${\bf U}^{\top}{\bf U}={\bf I}$.  Then, the $\ell$th subcodeword is the multiplication of an orthogonal dictionary matrix and the sparse vector ${\bf c}_{\ell}={\bf U}{\bf x}_{\ell}$ for $\ell\in [L]$. Here, note that all subcodewords share the same orthogonal dictionary matrix.

\vspace{0.1cm}
{\bf Constellation:} We define a set of signal levels for the element in ${\bf x}_{\ell}$. Let $J_{\ell}$ be a non-zero alphabet size of $x_{\ell}$ for $\ell\in [L]$. Then, the signal level set for the non-zero values of the $\ell$th codeword vector is $	\mathcal{A}_{\ell}=\{ a_{\ell,1}, a_{\ell,1},\ldots, a_{\ell,J_{\ell}}\}$. $\mathcal{A}_{\ell}$ are typically chosen from arbitrary pulse amplitude modulation (PAM) signal sets.  By the union of $L$ signal level sets, we define a multi-level modulation set as ${\mathcal A}= \cup_{\ell=1}^L\mathcal{A}_{\ell}$. For simplicity in the decoding process, we assume that the signal level sets for distinct subcodewords are distinct, i.e., $\mathcal{A}_{\ell} \cap \mathcal{A}_{j}=\phi$ for $\ell\neq j \in [L]$.


\begin{figure*}
	\centering 
   \includegraphics[width=0.9\textwidth]{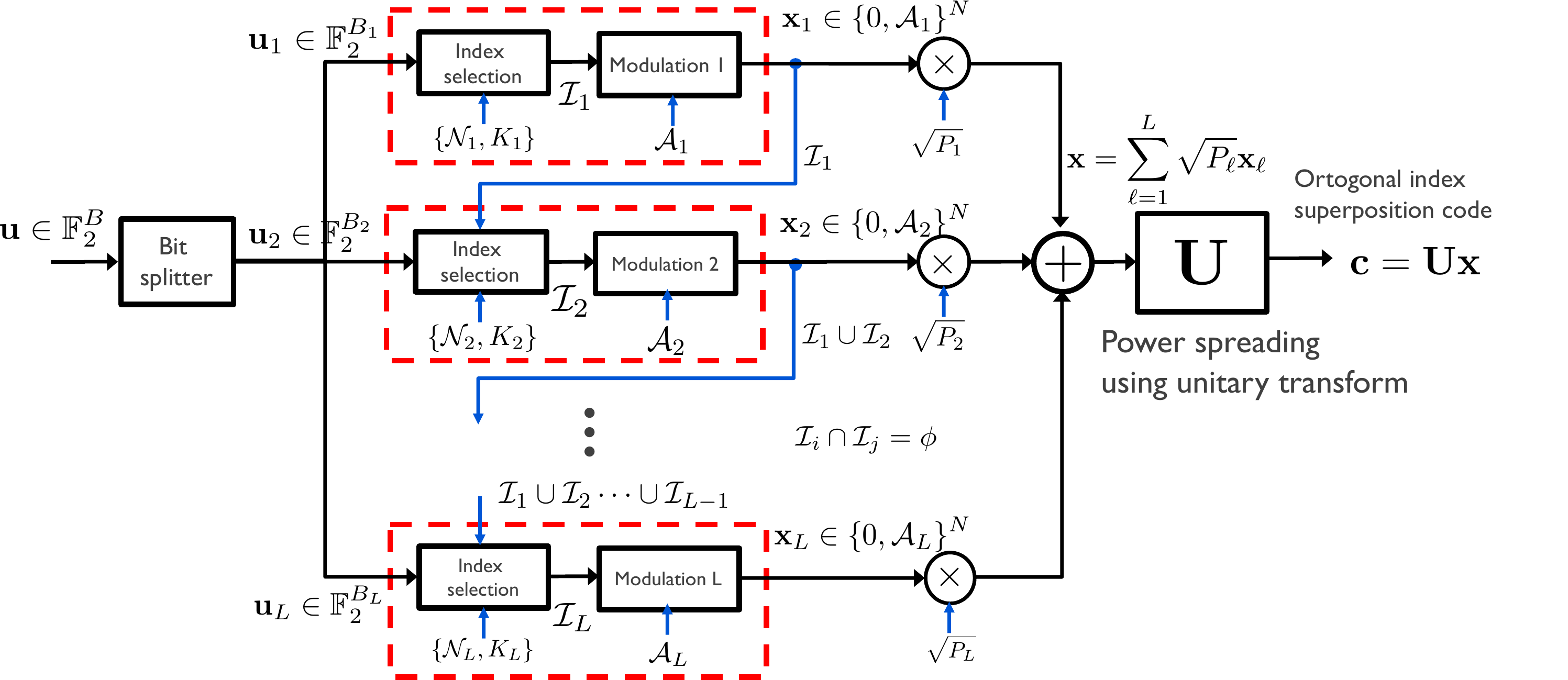}
  \caption{Proposed successive encoder structure for the orthogonal sparse superposition code construction.}  \label{figure:Fig1}
\end{figure*}

\subsection{Successive Encoding}
The idea of successive encoding is illustrated in Fig. \ref{figure:Fig1}. As can be seen, the proposed encoding scheme is to successively map information bits into $L$ orthogonal subcodeword vectors ${\bf c}_{\ell}={\bf U}{\bf x}_{\ell}\in \mathbb{R}^N$, where ${\bf c}_{\ell}^{\top}{\bf c}_{j}=0$ for $\ell\neq j \in [L]$.

We first explain how to construct the first subcodeword ${\bf c}_1$ from the first sparse message vector ${\bf x}_1$.  Let ${\bf u}_{1}\in \mathbb{F}_2^{B_{1}}$ be a binary information strings with length $B_1$. The encoder maps  ${\bf u}_{1}$ into ${\bf x}_{1} \in \left\{ { 0}\cup \mathcal{A}_1\right\}^N$ by uniformly selecting $K_1$ indices in $\mathcal{N}_{1}=[N]$. We define the support set of ${\bf x}_1=[x_{1,1}, x_{1,2}, \ldots ,x_{1,N}]^{\top}$ by $\mathcal{I}_1 =\left\{ n \mid x_{1,n} \in \mathcal{A}_1\right\}$.
Then, the encoder uniformly assigns elements in $\mathcal{A}_{1}$  into the non-zero position of ${\bf x}_{1}$. Therefore, in the first layer, the first subcodeword conveys $B_{1}=\left\lfloor\log_2\left({ |\mathcal{N}_{1}|}\choose{K_1} \right)\right\rfloor+K_1\log_2(|\mathcal{A}_{1}|)$ information bits.

Utilizing the support information of ${\bf x}_1$, i.e., $\mathcal{I}_{1}$, the encoder constructs ${\bf c}_2={\bf U}{\bf x}_2$ so that it is orthogonal to ${\bf c}_1$, i.e., ${\bf c}_2^{\top}{\bf c}_1$=0. To accomplish this, the encoder defines an index set for the second layer as $\mathcal{N}_2\subset [N]/\mathcal{I}_1$, which ensures $\mathcal{I}_{1}\cap \mathcal{I}_{2}\neq \phi$. In the second layer, the encoder maps ${\bf u}_{2}\in \mathbb{F}_2^{B_2}$ into ${\bf x}_{2} \in \left\{ 0, \mathcal{A}_{2} \right\}^N$ by uniformly choosing $K_2$ indices in $\mathcal{N}_{2}$ and uniformly allocating elements in $\mathcal{A}_{2}$ into the non-zero position of ${\bf x}_{2}$. The second codeword ${\bf c}_2={\bf U}{\bf x}_2$, therefore, carries $B_{2}=\left\lfloor\log_2\left({ |\mathcal{N}_{2}|}\choose{K_2} \right)\right\rfloor+K_2\log_2(|\mathcal{A}_{2}|)$ information bits.

The encoder successively applies the same principle until the $L$th layer. Let $\mathcal{N}_L\subset[N]/\left\{\mathcal{I}_1\cup \mathcal{I}_2 \cup \cdots \mathcal{I}_{L-1}\right\}$.  The encoder maps ${\bf u}_{L}\in \mathbb{F}_2^{B_{L}}$   into ${\bf x}_{L}$ by uniformly choosing $K_L$ indices among $\mathcal{N}_L$. Then, it uniformly assigns elements in $\mathcal{A}_L$ in the selected position of ${\bf x}_{L}$. Thanks to the orthogonal construction, the support sets of ${\bf x}_L$, i.e., $\mathcal{I}_{L}$, are mutually exclusive with the union of the support sets of ${\bf x}_{j}$ for $j\in \{1,2,\ldots, L-1\}$, i.e., $\mathcal{I}_{L}\cap \left\{\cup_{j=1}^{L-1}\mathcal{I}_{j}\right\} =\phi$. This guarantees the orthogonality principle among subcodewords, ${\bf c}_{\ell}^{\top}{\bf c}_{j}=0$ for $\ell\neq j \in [L]$.

Finally, a codeword of the orthogonal superposition code is a superposition of $L$ subcodeword vectors, namely, 
\begin{align}
	{\bf c} & = \sum_{\ell=1}^L{\bf U}{\bf x}_{\ell}  = {\bf U}{\bf x} =\sum_{j\in \mathcal{I}}{\bf U}_jx_j,
\end{align}
where ${\bf x}=\sum_{\ell=1}^L{\bf x}_{\ell}$ with $\|{\bf x}\|_0=\sum_{\ell=1}^LK_{\ell}$ and $\mathcal{I} = \{ j \mid  { x}_j \neq 0\} =\cup_{\ell=1}^L\mathcal{I}_{\ell}$.

\subsection{Properties}
To shed further light the significance of our code construction method, we provide some properties of the code. 


\vspace{0.1cm}
{\bf Decodability:} Under noiseless case, an orthogonal sparse superposition code is uniquely decodable since $\mathcal{A}_{\ell}\cap \mathcal{A}_{k}=\phi$ for $\ell\neq k\in [L]$. Suppose ${\bf U}={\bf I}$, i.e., ${\bf c}={\bf x}$. This is true because a decoder distinguishes the $\ell$th subcodewords ${\bf x}_{\ell}$ from ${\bf x} =\sum_{\ell=1}^L{\bf x}_{\ell}$, provided $a_{\ell,i}\notin \cup_{j\neq \ell}\mathcal{A}_j$. Then, the decoder performs an inverse mapping from ${\bf x}_{\ell}$ to ${\bf u}_{\ell}$ to obtain $B_{\ell}$ information bits.

 \vspace{0.1cm}
 {\bf  Orthogonality:} The most prominent property of the orthogonal sparse superposition code is the orthogonality between subcodewords. This property facilitates to perform decoding in a computationally efficient manner, which will be explained in Section III.  

 \vspace{0.1cm}
{\bf Code rate:} The $\ell$th subcodeword conveys $B_{\ell}=\left\lfloor\log_2\left({ |\mathcal{N}_{\ell}|}\choose{K_{\ell}} \right)\right\rfloor+K_{\ell}\log_2(|\mathcal{A}_{\ell}|)$ information bits using $N$ channel uses. Therefore, the rate of the orthogonal sparse superposition code is 
\begin{align}
		R=\frac{\sum_{\ell=1}^L\left\lfloor\log_2\left({ |\mathcal{N}_{\ell}|}\choose{K_{\ell}} \right)\right\rfloor+K_{\ell}\log_2(|\mathcal{A}_{\ell}|)}{N}.\label{eq:rate}
		 \end{align}
For a symmetric case in which $|\mathcal{N}_{\ell}|=M (< N)$, $K_{\ell}=K$, and $|\mathcal{A}_{\ell}|=1$ for $\ell\in [L]$, the code rate simplifies to  $
		R=\frac{L\left\lfloor\log_2\left({ M}\choose{K} \right)\right\rfloor}{N}.$
For a fixed block length $N$, the proposed encoding scheme is able to construct codes with very flexible rates by appropriately choosing the code parameters including the number of layers $L$, the number of non-zero values per layer $K_{\ell}$, the index set size per layer $|\mathcal{N}_{\ell}|$, and the non-zero alphabets in each layer $\mathcal{A}_{\ell}$. These code parameters can be optimized to control the trade-off between code rates and decoding errors. 


 \vspace{0.1cm}   
 {\bf The average transmit power:} One intriguing property of the codes is that its average transmit power is extremely low. Without loss of generality, we set ${\bf U}={\bf I}$, i.e., ${\bf c}={\bf x}$. Since the $\ell$th subcodeword has the sparsity level of $K_{\ell}$ and its non-zero value is chosen from $\mathcal{A}_{\ell}$, the average power of ${\bf x}_{\ell}$ is 
 \begin{align}
 \mathbb{E}\left[\|{\bf x}_{\ell}\|_2^2\right] = \frac{ K_{\ell} \frac{\sum_{i=1}^{|\mathcal{A}_{\ell}|} a_{\ell,i}^2}{|\mathcal{A}_{\ell}|}}{N}.
 \end{align}
 Since all subcodeword vectors are orthogonal, the average power of ${\bf x}$ becomes
 \begin{align}
 	E_s= \mathbb{E}\left[\|{\bf x}\|_2^2\right]=\sum_{\ell=1}^L\mathbb{E}\left[\|{\bf x}_{\ell}\|_2^2\right]=\sum_{\ell=1}^L\frac{ K_{\ell} \sum_{i=1}^{|\mathcal{A}_{\ell}|} a_{\ell,i}^2 }{|\mathcal{A}_{\ell}|N}.   \label{eq:Es}
 \end{align}
 
  \vspace{0.1cm}  
 {\bf Nominal coding gain:} For a codebook $\mathcal{C}=\left\{{\bf c}^1,{\bf c}^2,\ldots, {\bf c}^{2^{NR}}\right\}$ with size of $2^{NR}$, the minimum Euclidean distance of the codebook is defined as
 \begin{align}
 	d_{\rm min} (\mathcal{C})&=\min_{i,j\in [2^{NR}] } \| {\bf c}^{i}-{\bf c}^j\|_2\nonumber\\
 	&=\min_{\{\ell_1,i_1\}\neq\{\ell_2,i_2\}} \|a_{\ell_1,i_1}-a_{\ell_2,i_2}\|_2. \label{eq:dmin}
 	 \end{align} 
Using the definitions in \eqref{eq:rate}, \eqref{eq:Es}, and \eqref{eq:dmin}, the nominal coding gain\cite{Forney1998} of the proposed OSS codes is given by
 \begin{align}
 	\gamma_c(\mathcal{C}) &=\frac{d^2_{\rm min}(\mathcal{C})/4}{E_s/R}. \label{eq:codinggain}
 \end{align}


\vspace{0.1cm}
{\bf Example:} For ease of exposition, we restrict our attention to the case where $[N,L]=[48,2]$ and $K_{1}=K_2=2$ and ${\bf U}={\bf I}$. We also consider two PAM set $\mathcal{A}_1=\{-1,1\}$ and $\mathcal{A}_2=\{-2,2\}$. In this example, we construct an OSS code with rate $R=\frac{1}{2}$ and block length $48$. Using the idea of successive encoding, the encoder generates two subcodeword vectors. By choosing two non-zero positions in $\mathcal{N}_1=\{1,2,\ldots, 48\}$ and allocating them to 1 or -1 uniformly, we map $B_1=\left\lfloor\log_2\left({{48}\choose{2}}\right)2^2\right\rfloor=12$ information bits to ${\bf x}_1$. Without loss of generality, we assume that $\mathcal{I}_1=\{1,2\}$. Then, the encoder maps $B_2=\left\lfloor\log_2\left({{46}\choose{2}}2^2\right)\right\rfloor=12$ bits to ${\bf x}_2$ by selecting two non-zero indices from $\mathcal{N}_2=[N]/\mathcal{I}_1$ with $|\mathcal{N}_2|=46$. It uniformly allocates $2$ or -2 to the non-zero elements in ${\bf x}_2$.  Since each subcodeword has ternary alphabets, the OSS code becomes $ {\bf c}={\bf x}_{1}+  {\bf x}_{2} \in \{-2,-1,0,1,2\}^{48}$.
This code has alphabet size of five and the codeword is sparse, i.e., $\|{\bf c}\|_0=4$. The normalized average transmit power per channel use becomes
\begin{align}
	E_s=\frac{\mathbb{E}\left[\|{\bf x}\|_2^2\right]}{N} = \frac{8}{48} +\frac{2}{48} = \frac{5}{24} .
\end{align}
Since the minimum distance is two, the nominal coding gain of this code becomes
\begin{align}
	  \gamma_c(\mathcal{C})=\frac{1}{5/24}=4.8~(6.8~{\rm dB}).
\end{align}

\subsection{Remarks}
We provide some remarks to highlight the difference with the existing coding and modulation methods.

\vspace{0.1cm}
{\bf Remark 1 (Difference with SPARCs):}  To highlight the difference, we compare the restricted isometry property (RIP) constants of the dictionary matrices of the two codes, which are key metrics to measure the sparse signal recovery performance \cite{RIP}.  For ease of exposition, we consider a two-layer orthogonal sparse superposition code with $\mathcal{A}_1=\{1\}$ and $\mathcal{A}_2=\{-1\}$. In this case, we can define an extended dictionary matrix as
\begin{align}
    \bar{\bf U}&=[1 -1]\otimes{\bf U}\in\mathbb{R}^{N\times 2N},
\end{align}
where $\otimes$ is the Kronecker product. We also define an extended binary sparse vector ${\bf \bar x}\in \{0,1\}^{2N\times 1}$ that satisfies $\bar{\bf U}\bar{\bf x}={\bf U}{\bf x}$ using a one-to-one mapping function $g({\bf x}):\{1,0,-1\}^N \rightarrow \{0,1\}^{2N}$. Consequently, the decoding problem is to recover a binary vector under the block sparsity constraint on ${\bf \bar x}$. By the orthogonal construction, the RIP constant of the dictionary matrix $\bar{\bf U}$ becomes zero, i.e., $\delta=0$. Whereas, the Gaussian random dictionary matrix for SPARCs  has a non-zero RIP constant $\delta>0$, and this constant tends to increase as the dictionary matrix size shrinks. Therefore, the propose OSS code is more beneficial than SPARCs in a low-rate and short block  length regime.

\vspace{0.1cm}
 {\bf Remark 2 (Orthogonal multiplexing for multi-layer index modulated signals):} The proposed coding scheme also generalizes the existing index modulation methods \cite{Basar2013}. Suppose a single-layer encoding with one section, i.e., $L=1$. This method is identical to the index modulation. Therefore, our coding scheme can be interpreted as an efficient orthogonal multiplexing method of a multi-layer index (spatial) modulated signals \cite{Basar2013,Renzo2011,Lee_TSP}. Therefore, one can use the proposed OSS codes as a modulation technique in conjunction with modern codes (e,g., polar and LDPC codes).


 \vspace{0.1cm}
  {\bf Remark 3 (Difference with permutation modulation codes):} One interesting connection is that the classical permutation modulation codes are special cases of our OSS codes with the joint information bit mapping technique. Specifically, by setting signal level sets to be a singleton $\mathcal{A}_{\ell}=\{a_{\ell,1}\}$ for $\ell\in[L]$, it is possible to generate the identical codebook with  Variant I codebook in \cite{Slepian} with a joint mapping, which achieves the rate of
  \begin{equation*}
  R=\frac{\left\lfloor\log_2\left( {\prod_{\ell=1}^L{{N-(\ell-1)K}\choose{K}}}\right)\right\rfloor}{N}.
  \end{equation*} Thanks to the degrees of freedom to design the signal levels per subcodewords and the additional information bit mapping of the non-zero elements, our encoding scheme is able to generate a large codebook for given $N$ and $K$. Besides, the superposition encoding method of multiple subcodewords facilitates to implement the encoder in practice because it dwindles the encoding complexity by the separate bit-mapping technique.

  \section{A Low-Complexity Decoding Algorithm}

 This section presents a low complexity decoding algorithm, referred to as element-wise maximum a posterior decoding with successive support set cancellation (E-MAP-SSC). For simplicity, we assume ${\bf U}={\bf I}$ throughout this section. 
   
 The main idea of the proposed algorithm is to successively decode sparse subcodeword vector ${\bf x}_{\ell}$ in ${\bf x}$ so as to maximize a posterior probability (APP) using the Bayesian approach \cite{MAPsupp,BMP}. Recall that the joint APP can be factorized as
 \begin{align}
    {\sf P}\left( {\bf x} |  {\bf y}\right)&=\prod_{{\ell}=1}^{L}{\sf P}\left({\bf  x}_{\ell}|  {\bf y}, {\bf x}_{\ell-1},\ldots, {\bf x}_2, {\bf x}_1 \right)\nonumber \\
    &=\prod_{{\ell}=1}^{L}{\sf P}\left({\bf  x}_{\ell}|  {\bf y}, {\mathcal{I}}_{\ell-1},\ldots, {\bf \mathcal{I}}_2, {\bf \mathcal{I}}_1 \right),
    \label{eq:product_form}
\end{align}
where the equality follows from the fact that ${\mathcal{I}}_{\ell-1},\ldots, {\bf \mathcal{I}}_2, {\bf \mathcal{I}}_1$ are sufficient information to decode ${\bf x}_{\ell}$. From this decomposition, the proposed E-MAP-SSC aims to successively estimate each subcodeword vector by exploiting the knowledge of previously identified support sets. This motivates us to consider a decoding method with successive support set cancellation.  

The proposed decoder performs $L$ iterations. Each iteration decodes a subcodeword vector and subtracts the contribution of the previously identified support sets to evolve the prior distribution for the next iteration. Suppose the decoder has correctly identified the non-zero support sets ${\mathcal{\hat I}}_{\ell-1},\ldots, {\bf \mathcal{\hat I}}_2, {\bf \mathcal{\hat I}}_1$ in the previous $\ell-1$ iterations. Using this information, it performs element-wise MAP decoding to identify support sets $\mathcal{\hat I}_{\ell}$. Let $\mathcal{\hat N}_{\ell} =[N]/ \cup_{j=1}^{\ell-1}{\mathcal{\hat I}}_{j}$. Since all received signals $y_{n}$ for $n\in \mathcal{\hat N}_{\ell}$ are conditionally independent for given ${\bf x}_{\ell}$, the joint APP is factorized as 
\begin{align}
	&{\sf P}\left( {\bf x}_{\ell} \mid {\bf y}, {\mathcal{\hat I}}_{\ell-1},\ldots, {\bf \mathcal{\hat I}}_2, {\bf \mathcal{\hat I}}_1\right) \nonumber\\
	\!\!&\!\!=\!\frac{1}{Z}\!\prod_{n\in \mathcal{\hat N}_{\ell}}\frac{ {\sf P}\left( y_n \mid { x}_{\ell,n} \right){\sf P}\left(  { x}_{\ell,n} \right)  }{{\sf P}\left( y_n \right)} {\bf 1}_{\left\{ \sum_{n\in \mathcal{\hat N}_{\ell}}{\bf 1}_{\{x_{\ell,n}\in\mathcal{A}_{\ell} \}} =K_{\ell} \right\} }, \label{eq:posteri}
\end{align}
To compute this, we need the likelihood function, which is given by
\begin{align}
	{\sf P}\left( y_n \mid {x}_{\ell,m} \in \mathcal{A}_{\ell} \right) =\frac{1}{|\mathcal{A}_{\ell}|}\sum_{j=1}^{|\mathcal{A}_{\ell}|} \frac{1}{\sqrt{2\pi \sigma^2}}\exp\left(-\frac{|y_n- {a}_{\ell,j}|^2}{2\sigma^2}\right). \label{eq:likelihood}
\end{align}
The prior distribution of ${\bf x}_{\ell}$ is also decomposed into
 \begin{align}
 	{\sf P}\left({\bf x}_{\ell}\right) = \frac{1}{Z}\prod_{n\in \mathcal{\hat N}_{\ell}}{\sf P}(x_{\ell,n}) {\bf 1}_{\left\{ \sum_{n\in \mathcal{\hat N}_{\ell}}{\bf 1}_{\{x_{\ell,n}\in\mathcal{A}_{\ell} \}} =K_{\ell} \right\} },
 \end{align}
where $Z\in \mathbb{R}^{+}$ denotes a constant to be a probability distribution and ${\bf 1}_{\mathcal{C}}$ is an indicator function for set $\mathcal{C}$. Recall that the non-zero supports of ${\bf x}_{\ell}$ are uniformly selected from $\mathcal{\hat N}_{\ell}$ with $|\mathcal{\hat N}_{\ell}|= N-\sum_{j=1}^{\ell-1} K_j$. The probability mass function of ${x}_j$ becomes
 \begin{align}
	{\sf P}({x}_{\ell,n}) =  \begin{cases}
    x_{\ell,n}=0 &  {\rm w.p.}~~~1-\frac{K_{\ell}}{N-\sum_{j=1}^{\ell-1} K_j}\\
    x_{\ell,n}\in  \mathcal{A}_{\ell} & {\rm w.p.}~~~\frac{K_{\ell}}{N-\sum_{j=1}^{\ell-1} K_j}. \label{eq:prior_app}\\
     \end{cases}
\end{align} 
 Invoking the prior distribution of ${x}_{\ell,n}$ in \eqref{eq:prior_app}, we obtain 
\begin{align}
	{\sf P}\left( y_n \right)&= {\sf P}\left( y_n \mid {x}_{\ell,n} \in \mathcal{A}_{\ell} \right){\sf P}\left( {x}_{\ell,n} \in \mathcal{A}_{\ell} \right) \nonumber\\
	&~~+ {\sf P}\left( y_n \mid {x}_{\ell,n} \notin \mathcal{A}_{\ell}  \right){\sf P}\left( {x}_{\ell,n} \notin \mathcal{A}_{\ell}  \right) \nonumber\\
	&=\frac{1}{\sqrt{2\pi \sigma^2}} \sum_{j=1}^{|\mathcal{A}_{\ell}|} e^{-\frac{|y_n- a_{\ell,j}|^2}{2\sigma^2}}\frac{K_{\ell}}{N-\sum_{j=1}^{\ell-1} K_j} \frac{1}{|\mathcal{A}_{\ell}|} \nonumber \\
	&+\frac{1}{\sqrt{2\pi \sigma^2}} e^{-\frac{|y_n|^2}{2\sigma^2}}\left(1-\frac{K_{\ell}}{N-\sum_{j=1}^{\ell-1} K_j}\right) .  \label{eq:dist_y_iter1}
\end{align}
Utilizing \eqref{eq:likelihood}, \eqref{eq:prior_app}, and \eqref{eq:dist_y_iter1},  the decoder computes the probability of event that $n\in \mathcal{I}_{\ell}$ given $y_n$  as
 \begin{align}
&	{\sf P}(n\in \mathcal{I}_{\ell} | y_n)=\frac{ {\sf P}\left( y_n \mid {x}_{\ell,n}\in \mathcal{A}_{\ell} \right){\sf P}\left(  {x}_{\ell,n}\in \mathcal{A}_{\ell} \right)  }{{\sf P}\left( y_n \right)}  \nonumber \\
&=  \frac{  \sum_{j=1}^{|\mathcal{A}_{\ell}|} e^{-\frac{|y_n- a_{\ell,j}|^2}{2\sigma^2}}\frac{K_{\ell}}{N-\sum_{j=1}^{\ell-1} K_j} \frac{1}{|\mathcal{A}_{\ell}|} }{  \sum_{j=1}^{|\mathcal{A}_{\ell}|} e^{-\frac{|y_n- a_{\ell,j}|^2}{2\sigma^2}}\frac{K_{\ell}}{N-\sum_{j=1}^{\ell-1} K_j} \frac{1}{|\mathcal{A}_{\ell}|} + e^{-\frac{|y_n|^2}{2\sigma^2}}\left(1-\frac{K_{\ell}}{N\sum_{j=1}^{\ell-1} K_j}\right)}.\label{eq:APP}
\end{align}
To satisfy the sparsity condition in ${\bf x}_{\ell}$, i.e., $\sum_{n\in \mathcal{\hat N}_{\ell}}{\bf 1}_{\{x_{\ell,n}\in\mathcal{A}_{\ell} \}} =K_{\ell}$, the decoder estimates the support sets of ${\bf x}_{\ell}$ by selecting $K_{\ell}$ indices that provide the $K_{\ell}$-largest element-wise MAP probabilities \eqref{eq:APP}. Let ${\hat i}_{\ell,k}$ be the ordered index that has the $k$th largest element-wise MAP, i.e.,  ${\sf P}\left({\hat i}_{\ell,1} \in \mathcal{I}_{\ell} | y_{\ell,1}\right)>{\sf P}\left({\hat i}_{\ell,2} \in \mathcal{I}_{\ell} | y_{\ell,2}\right),\ldots$ for ${\hat i}_{\ell,k}\in \mathcal{\hat N}_{\ell}$.  Then, the estimated support set of ${\bf x}_{\ell}$ is
\begin{align}
	\mathcal{\hat I}_{\ell}=\left\{ {\hat i}_{\ell,1}, {\hat i}_{\ell,2},\ldots, {\hat i}_{\ell,K_{\ell}} \right\}.
\end{align}
Once the support set is identified in the first step, the decoder performs MAP estimate for the signal levels for $x_{\ell,n}$ for $n\in \mathcal{\hat I}_{\ell}$ in the second step. In particular, this simplifies to the minimum Euclidean distance decoding given by
\begin{align}
		 {\hat x}_{\ell,n} &= \arg\max_{ a_{\ell,j}\in \mathcal{A}_{\ell}} {\sf P}\left(x_{\ell,n}=a_{\ell,j} | y_n,  n\in \mathcal{\hat I}_{\ell}\right) \nonumber\\
		 &= \arg\min_{a_{\ell,j}\in \mathcal{A}_{\ell}}  |y_n- a_{\ell,j}|^2.
		 \end{align}
By repeatedly performing these procedures, the iteration ends when $\ell=L$. Our decoding algorithm is illustrated in Fig. \ref{figure:Fig7}.

\begin{figure*}
	\centering 
   \includegraphics[width=0.7\textwidth]{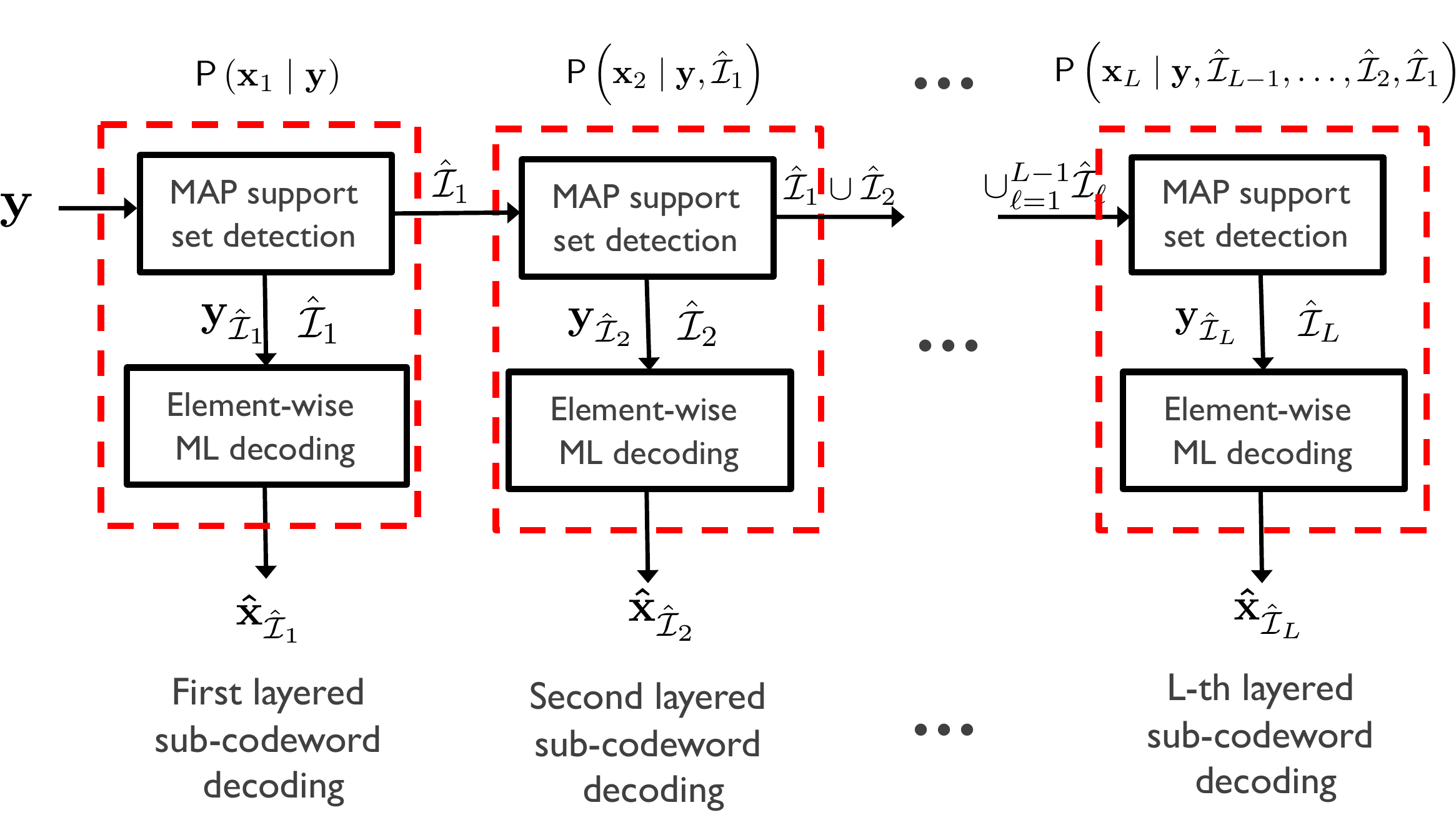}
  \caption{Proposed E-MAP-SSC decoder.}  \label{figure:Fig7}
\end{figure*}

\vspace{0.1cm}
{\bf Remark 4 (Linear decoding complexity):} The decoding complexity of the proposed E-MAP-SSC is linear in block length $N$. Suppose the $\ell$th iteration. In this iteration, the decoder computes $N-\sum_{j=1}^{\ell-1}K_j$ APPs in an element-wise manner as in \eqref{eq:APP}. Then, it identifies the support set $\mathcal{\hat I}_{\ell}$ by selecting the $K_{\ell}$ largest indices among $N-\sum_{j=1}^{\ell-1}K_j$ APP values.  The computational complexity order in finding $\mathcal{\hat I}_{\ell}$ is $\mathcal{O}\left( \left( N-\sum_{j=1}^{\ell-1}K_j \right)\log (K_{\ell})\right)$.  Once $\mathcal{\hat I}_{\ell}$ is obtained, the decoder performs maximum likelihood signal level detection for the $\ell$th layer, which takes $|\mathcal{A}_{\ell}|{K_{\ell}}$ computations.  As a result, the total decoding complexity becomes $\sum_{\ell=1}^L\left(N-\sum_{j=1}^{\ell-1}K_j\right)\log(K_{\ell}) + |\mathcal{A}_{\ell}|{K_{\ell}}$. Under the premise that $N\gg K_{\ell}$ for $\ell\in[L]$, it roughy becomes $\mathcal{O}\left(LN\right)$, which is linear in both the block length and the number of layers.

%
%
%

\vspace{0.1cm}
{\bf Remark 5 (Optimality of a simple ordered statistics decoder):} We consider a two-layer OSS code with $K_1$, $K_2$, $\mathcal{A}_1=\{1\}$, and $\mathcal{A}_2=\{-1\}$. In this case, we show that the proposed E-MAP-SSC algorithm is equivalent to a simple ordered statistics decoder. To see this, we compute the posterior probability of an event $n\in\mathcal{I}_1$. By plugging 
\begin{align}
    &{\sf P}\left( y_n \mid n\in \mathcal{I}_1\right)=\exp\left(-\frac{\left(y_n-1\right)^2}{2\sigma^2}\right),\nonumber\\
    &{\sf P}\left( y_n \mid n\notin \mathcal{I}_1\right)=\frac{(N-K_1)\exp\left(-\frac{y^2_n}{2\sigma^2}\right)+K_1\exp\left(-\frac{(y_n+1)^2}{2\sigma^2}\right)}{N-2K_1}, \nonumber\\
    &{\sf P}\left( n \in \mathcal{I}_{1} \right)=\frac{K_1}{N}, \ \ \text{and} \ \ {\sf P}\left( n \notin \mathcal{I}_{1} \right)=\frac{N-2K_1}{N}
\end{align}
into \eqref{eq:APP}, we obtain
\begin{align}
    	{\sf P}(n\in \mathcal{I}_1 | y_n)&=\frac{1}{1+\exp\left(-\frac{2y_n}{\sigma^2}\right)+\frac{\left(N-2K_1\right)}{K_1}\exp\left(-\frac{2y_n-1}{2\sigma^2}\right)}.
\end{align}
Since ${\sf P}(n\in \mathcal{I}_1 | y_n)$ is a monotonically increasing function of $y_n$, we conclude that the index set $\mathcal{\hat I}_{1}$ is determined by the $K_1$ largest values in $y_n$. Similarly, $\mathcal{\hat I}_{2}$ is determined by the $K_2$ smallest values in $y_n$. This simple decoding method is illustrated in Fig. \ref{figure:Fig8}.

\begin{figure}
	\centering 
   \includegraphics[width=0.5\textwidth]{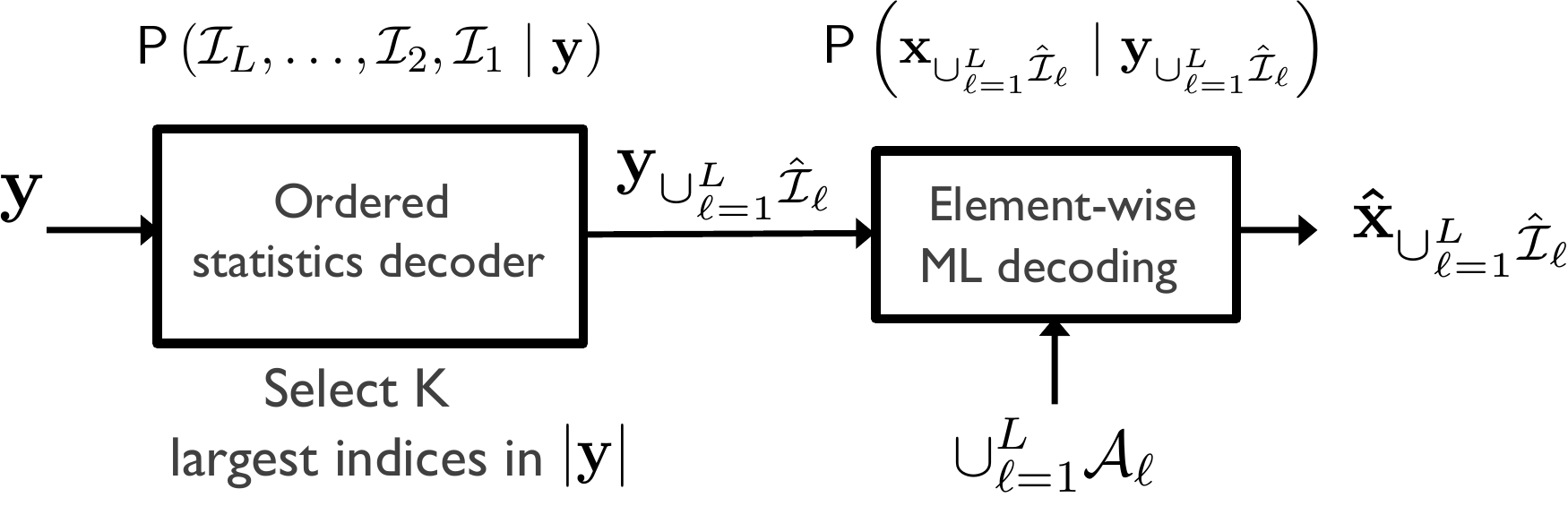}
  \caption{Simple ordered statistics decoder.}  \label{figure:Fig8}
\end{figure}

\section{Performance Analysis}
In this section, we first provide the analytical expressions of the BLERs of simple OSS codes in terms of relevant code parameters. We also show that the simple code achieves the ultimate Shannon limit in the power-limited regime.  We analyze the coding gains of our code to gauge the performance of a finite block length.  

\subsection{BLER Performance Analysis}

The following theorem shows an exact expression of BLERs for a single-layered OSS code.\vspace{0.1cm}
\begin{thm} \label{thm1}
For a single-layered OSS code with rate $R=\frac{ \left\lfloor\log_2\left({ N}\choose{K_{1}} \right)\right\rfloor}{N}$, the BLER is \begin{align}  
        {\sf P}({\mathcal E}) \!=\! 1\! -\! \frac{(N\!-\!K_1)}{\sqrt{2\pi\sigma^2}} \!\!\int_{-\infty}^{\infty} \!\!\! Q\left(\frac{y\!-\!1}{\sigma}\right)^{K_1}\!\!\!\left\{1\!-\!Q\left(\frac{y}{\sigma}\right)\right\}^{\!N\!-\!K_1\!-\!1}\!e^{-\frac{y^2}{2\sigma^2}}{\rm d}y,\label{eq:thm1}
\end{align}
where $Q(x)=\int_{x}^{\infty} \frac{1}{\sqrt{2\pi}} e^{-\frac{u^2}{2}}{\rm d}u$.
\end{thm}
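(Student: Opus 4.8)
The plan is to translate the block-error event into a statement about order statistics of the received coordinates, and then evaluate the resulting probability by conditioning on the largest noise sample.

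First, by the symmetry of the construction (the support $\mathcal{I}_1$ is chosen uniformly and $\mathcal{A}_1=\{1\}$), I would fix without loss of generality the transmitted support to be $\mathcal{I}_1=\{1,\dots,K_1\}$. With ${\bf U}={\bf I}$ this makes the received coordinates mutually independent, with $y_n\sim\mathcal{N}(1,\sigma^2)$ for $n\le K_1$ (the \emph{signal} coordinates) and $y_n\sim\mathcal{N}(0,\sigma^2)$ for $n>K_1$ (the \emph{noise} coordinates). For $L=1$ the E-MAP-SSC rule reduces to the ordered-statistics decoder discussed in Remark~5: since the element-wise posterior ${\sf P}(n\in\mathcal{I}_1\mid y_n)$ is a strictly increasing function of $y_n$ (the likelihood ratio between the mean-$1$ and mean-$0$ hypotheses is $\exp((2y_n-1)/2\sigma^2)$, which is monotone in $y_n$), the decoder simply outputs the indices of the $K_1$ largest entries of ${\bf y}$.

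Second, I would characterize correct decoding. Because ties occur with probability zero, $\hat{\mathcal I}_1=\mathcal I_1$ holds exactly when the $K_1$ largest coordinates are precisely the signal coordinates, i.e.\ when every signal coordinate exceeds every noise coordinate. Equivalently,
\[
\min_{1\le n\le K_1} y_n \;>\; \max_{K_1<n\le N} y_n .
\]
Hence ${\sf P}(\mathcal E)=1-{\sf P}(\min\text{-signal}>\max\text{-noise})$, and the whole problem reduces to computing the probability on the right. The computational core is then to evaluate this by conditioning on the value $y$ of the largest noise coordinate. The maximum of the $N-K_1$ i.i.d.\ $\mathcal N(0,\sigma^2)$ noise samples has density $(N-K_1)\,\phi_\sigma(y)\,F_\sigma(y)^{N-K_1-1}$, where $\phi_\sigma(y)=\frac{1}{\sqrt{2\pi\sigma^2}}e^{-y^2/2\sigma^2}$ and $F_\sigma(y)=1-Q(y/\sigma)$ are the noise density and CDF. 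Given that this maximum equals $y$, independence of the signal coordinates makes the conditional probability that all $K_1$ of them exceed $y$ equal to $\big[Q((y-1)/\sigma)\big]^{K_1}$, since $Q((y-1)/\sigma)={\sf P}(\mathcal N(1,\sigma^2)>y)$. Integrating the product over $y\in\mathbb{R}$ yields exactly the integral in \eqref{eq:thm1}, and subtracting from one gives the claimed expression.

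The step I expect to require the most care is the second one: arguing that the error event is \emph{precisely} $\{\min\text{-signal}\le\max\text{-noise}\}$, so that the various partial overlaps of $\hat{\mathcal I}_1$ with $\mathcal I_1$ need not be enumerated separately. The point is that any misclassification of the top-$K_1$ set forces at least one noise coordinate above at least one signal coordinate, a condition already captured by the single min/max comparison (ties being null events). The order-statistics integral in the final step is then routine; one only needs to keep the prefactor $N-K_1$ and the exponent $N-K_1-1$ bookkept consistently. As a sanity check one could instead condition on the smallest signal coordinate, obtaining a superficially different but provably equal integral.
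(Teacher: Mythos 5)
Your proposal is correct and follows essentially the same route as the paper's own proof: reduce to the event $\min_{n\in\mathcal{I}}Y_n>\max_{n\in\mathcal{I}^c}Y_n$ via symmetry and the optimality of the ordered-statistics decoder, then condition on the maximum of the $N-K_1$ noise coordinates using the standard order-statistics density $(N-K_1)f(y)F(y)^{N-K_1-1}$ (the paper isolates this as Lemma~1) and integrate. No gaps; the monotone-likelihood-ratio justification you give for the top-$K_1$ selection rule matches the paper's appeal to Remark~5.
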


\begin{proof}
See Appendix A.
\end{proof}


The following theorem provides a tight upper bound of the BLER for two-layer OSS codes.

\begin{thm} \label{thm2}
For a two-layer OSS code with $R=\frac{ \left\lfloor\log_2\left({ N}\choose{K} \right)\right\rfloor + \left\lfloor\log_2\left({ N-K}\choose{K} \right)\right\rfloor}{N}$, the BLER is upper bounded by
\begin{align} \label{eq:thm2}
        {\sf P}({\mathcal E}) \!\leq\! 1 - &\left(1-Q\left(\frac{1}{\sigma}\right)\right)^{2K}\frac{(N\!-\!2K)}{\sqrt{2\pi\sigma^2}}\nonumber\\
        &\int_{0}^{\infty} \!\!\!Q_{\frac{1}{2}}\!\left(\frac{1}{\sigma},\!\frac{\sqrt{y}}{\sigma}\right)^{\!2K}\!\!\left\{1\!-\!2Q\left(\frac{\sqrt{y}}{\sigma}\right)\right\}^{N\!-2K-1}\!\!y^{-\frac{1}{2}}e^{-\frac{y}{\sigma^2}}{\rm d}y,
\end{align}
where $Q_M(a,b)$ denotes the generalized Marcum Q-function of order $M$.

\end{thm}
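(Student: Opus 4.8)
The plan is to mirror the conditioning argument behind Theorem~\ref{thm1}, but now to track how the two layers interact through the common noise entries. By Remark 5, for the parameters $\mathcal A_1=\{1\}$, $\mathcal A_2=\{-1\}$, $K_1=K_2=K$ the E-MAP-SSC decoder collapses to the ordered statistics rule: $\hat{\mathcal I}_1$ is the set of indices of the $K$ largest received samples and $\hat{\mathcal I}_2$ the set of the $K$ smallest. I would write the transmitted entries as $K$ ``$+1$'' samples with $y_n\sim\mathcal N(1,\sigma^2)$ on $\mathcal I_1$, $K$ ``$-1$'' samples with $y_n\sim\mathcal N(-1,\sigma^2)$ on $\mathcal I_2$, and $N-2K$ ``noise'' samples with $y_n\sim\mathcal N(0,\sigma^2)$ on $\mathcal I_0=[N]\setminus(\mathcal I_1\cup\mathcal I_2)$. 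Correct decoding is then the event that every $+1$ sample exceeds every entry indexed by $\mathcal I_0\cup\mathcal I_2$ and every $-1$ sample lies below every entry indexed by $\mathcal I_0\cup\mathcal I_1$. Since $N-2K\ge 1$, chaining through any single noise entry shows this is equivalent to $\{\min_{n\in\mathcal I_1}y_n>\max_{n\in\mathcal I_0}y_n\}\cap\{\max_{n\in\mathcal I_2}y_n<\min_{n\in\mathcal I_0}y_n\}$.

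The difficulty---and the reason the statement is an upper bound rather than an equality---is that $\max_{n\in\mathcal I_0}y_n$ and $\min_{n\in\mathcal I_0}y_n$ are dependent order statistics of the same noise block, so the two layer events do not factorize the way the single-layer event does in Appendix A. To decouple them I would pass to the single scalar statistic $T=\max_{n\in\mathcal I_0}|y_n|$ and use the sufficient sub-event $\mathcal E_{\mathrm{sub}}=\{\text{every signal sample has correct sign and }|y_n|>T\}$. Indeed, if each $+1$ sample obeys $y_n>T\ge\max_{n\in\mathcal I_0}y_n$ and each $-1$ sample obeys $y_n<-T\le\min_{n\in\mathcal I_0}y_n$, decoding succeeds, so $\mathcal E_{\mathrm{sub}}\subseteq\{\text{correct}\}$ and ${\sf P}(\mathcal E)\le 1-{\sf P}(\mathcal E_{\mathrm{sub}})$. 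Conditioning on $T=\tau$, the $2K$ signal samples are mutually independent and independent of the noise, and by the symmetry between the two layers each contributes the identical factor ($y_n>\tau$ for a $+1$ sample, $y_n<-\tau$ for a $-1$ sample), giving the clean conditional value ${\sf P}(\mathcal E_{\mathrm{sub}}\mid T=\tau)=Q\!\left(\tfrac{\tau-1}{\sigma}\right)^{2K}$.

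The final step is to split sign from magnitude and then average over $T$. I would invoke the per-sample inequality $Q\!\left(\tfrac{\tau-1}{\sigma}\right)\ge\bigl(1-Q(\tfrac1\sigma)\bigr)\bigl[Q(\tfrac{\tau-1}{\sigma})+Q(\tfrac{\tau+1}{\sigma})\bigr]$, which holds in the high-SNR (power-limited) regime of interest, and recognize the bracket as the signal-magnitude tail ${\sf P}(|y_n|>\tau)=Q_{1/2}(\tfrac1\sigma,\tfrac{\tau}\sigma)$ through the half-integer Marcum identity $Q_{1/2}(a,b)=Q(b-a)+Q(b+a)$. Raising this to the power $2K$, pulling out the sign-correctness factor $\bigl(1-Q(\tfrac1\sigma)\bigr)^{2K}$, and integrating against the extreme-value density of $T=\max_{n\in\mathcal I_0}|y_n|$---the maximum of $N-2K$ folded Gaussians, with density $(N-2K)\bigl(1-2Q(\tfrac\tau\sigma)\bigr)^{N-2K-1}\tfrac{2}{\sqrt{2\pi\sigma^2}}e^{-\tau^2/2\sigma^2}$, where the exponent $N-2K-1$ reflects the remaining noise entries lying inside $(-\tau,\tau)$---and finally substituting $y=\tau^2$ yields the integral \eqref{eq:thm2}. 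I expect the main obstacle to be exactly the noise-induced coupling of the two layers; the reduction to the scalar statistic $T$ is what makes the magnitude conditions decouple across layers, and it is precisely the slack in the sub-event $\mathcal E_{\mathrm{sub}}$ and in the sign/magnitude factorization that converts the identity of Theorem~\ref{thm1} into the inequality claimed here.
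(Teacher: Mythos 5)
Your proposal is correct and lands on the same skeleton as the paper's Appendix B: both arguments bound the BLER by the failure probability of the event that all $2K$ signal samples have correct sign and magnitude exceeding every noise magnitude (the paper phrases this as the exact success event of a sub-optimal two-stage magnitude-then-sign decoder; you phrase it as a sufficient sub-event for the ordered statistics decoder — these are the same event), and both then integrate the noncentral-$\chi^2$ tail, i.e.\ the Marcum $Q_{1/2}$, against the order-statistic density of the maximum noise magnitude from Lemma~\ref{lemma1}. Where you genuinely differ is in how the product form $\bigl(1-Q(\tfrac1\sigma)\bigr)^{2K}\cdot{\sf P}(\hat{\mathcal E}_1^c)$ is reached: the paper simply writes ${\sf P}(\hat{\mathcal E}_1^c\cap\hat{\mathcal E}_2^c)={\sf P}(\hat{\mathcal E}_1^c){\sf P}(\hat{\mathcal E}_2^c)$, tacitly treating the sign and magnitude of each signal sample as independent (they are not, since $Y_n\sim\mathcal N(\pm1,\sigma^2)$), whereas you compute the exact conditional success probability $Q\bigl(\tfrac{\tau-1}{\sigma}\bigr)^{2K}$ and then lower-bound it by the product via the per-sample inequality $Q\bigl(\tfrac{\tau-1}{\sigma}\bigr)\ge\bigl(1-Q(\tfrac1\sigma)\bigr)Q_{1/2}(\tfrac1\sigma,\tfrac\tau\sigma)$. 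This is the more careful route, and it shows why the factorized expression is still a valid upper bound on the BLER (sign-correctness and magnitude-exceedance are positively correlated). One thing to tighten: you hedge that inequality as holding ``in the high-SNR regime,'' but the theorem carries no SNR restriction, so you should actually prove it — it holds for all $\tau\ge0$ and all $\sigma$, with equality at $\tau=0$, because log-concavity of $Q$ makes the ratio $Q(\tfrac{\tau+1}{\sigma})/Q(\tfrac{\tau-1}{\sigma})$ non-increasing in $\tau$. Also note your change of variables $y=\tau^2$ yields $e^{-y/(2\sigma^2)}$, matching the paper's appendix; the $e^{-y/\sigma^2}$ in the theorem statement appears to be a typo.
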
 

\begin{proof}
See Appendix B.
\end{proof}

Although the analytical expressions of the BLERs in Theorem \ref{thm1} and \ref{thm2} are integral forms, they allow us to identify the minimum required SNR to achieve the extremely low BLER performance for a given code rate and code length without heavy Monte Carlo simulations. This is particularly useful for the application of the haptic feedback in tele-surgery, which needs a BLER below $10^{-9}$ \cite{URLLCapp}. We show the exactness of our analytical expressions in the sequel.

\subsection{Achievability of the Ultimate Shannon Limit}


We are now ready to state the main result. 

\begin{thm} 
A single-layered OSS code with $R=\frac{ \log_2\left( {N\choose K}\right)}{N}$ and a simple ordered statistics decoder achieve the ultimate Shannon limit in the power-limited regime.
\end{thm}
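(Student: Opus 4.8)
The plan is to prove achievability in the strong sense: for every target ratio $\rho > \ln 2$ I will drive the block error probability to zero, so that the code operates with arbitrarily small gap above the fundamental limit $E_b/N_0 = \ln 2$; since no code can beat $\ln 2$, this is exactly what ``achieving the ultimate Shannon limit'' means. Because $|\mathcal{A}_1|=1$ carries no amplitude information, the BLER of the single-layered code equals the probability that the ordered-statistics decoder mis-identifies the support, which by Remark 5 is precisely the event $\min_{n\le K} y_n \le \max_{m>K} y_m$ (signal index beaten by a noise index). The first step is to convert the energy-per-bit constraint into a noise level: with total codeword energy $\mathbb{E}[\|{\bf x}\|_2^2]=K$ and $B=\log_2\binom{N}{K}$ bits, one has $E_b/N_0 = K/\bigl(2\sigma^2\log_2\binom{N}{K}\bigr)$. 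Fixing this ratio to $\rho$ and using $\log_2\binom{N}{K}=K\log_2 N\,(1+o(1))$ for $K=N^{o(1)}$ yields the crucial identity $\sigma^2=\frac{1}{2\rho\log_2 N}(1+o(1))$, so that $e^{-\theta^2/(2\sigma^2)} = N^{-\theta^2\rho/\ln 2\,(1+o(1))}$ for any fixed $\theta$. This is the relation that couples the noise level to the code size.

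Next I would condition on the $K$ signal observations $y_n=1+v_n$, $n\le K$. Writing $s=\min_{n\le K} y_n$ and using that the $N-K$ noise observations are i.i.d. $\mathcal{N}(0,\sigma^2)$, the conditional error probability is $1-(1-Q(s/\sigma))^{N-K}\le\min\{1,(N-K)Q(s/\sigma)\}$. Splitting on a threshold $\theta\in(0,1)$ gives
\begin{align}
{\sf P}(\mathcal{E}) \le {\sf P}(s<\theta) + (N-K)\,Q\!\left(\frac{\theta}{\sigma}\right). \nonumber
\end{align}
The first, signal-side, term satisfies ${\sf P}(s<\theta)\le K\,Q\!\left(\frac{1-\theta}{\sigma}\right)\le \frac{K}{2}e^{-(1-\theta)^2/(2\sigma^2)}=\frac{K}{2}N^{-(1-\theta)^2\rho/\ln 2\,(1+o(1))}$, which vanishes since $K=N^{o(1)}$ and $\theta<1$. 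The second, noise-side, term satisfies $(N-K)Q(\theta/\sigma)\le \frac{N-K}{2}e^{-\theta^2/(2\sigma^2)}=\tfrac12 N^{\,1-\theta^2\rho/\ln 2\,(1+o(1))}$, which vanishes precisely when $\theta^2>\ln 2/\rho$.

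The two requirements $\theta<1$ and $\theta>\sqrt{\ln 2/\rho}$ are simultaneously satisfiable exactly when $\sqrt{\ln 2/\rho}<1$, i.e. when $\rho>\ln 2$. For any such $\rho$ I would fix a valid $\theta$, conclude ${\sf P}(\mathcal{E})\to 0$, and therefore that reliable communication is possible for every $E_b/N_0>\ln 2$, establishing the claim. For $K=1$ the codewords $\{{\bf e}_i\}$ are orthogonal of unit energy and the ordered-statistics rule coincides with the ML rule, so the statement specializes to the classical achievability of $\ln 2$ by $M$-ary orthogonal signaling; general fixed $K$ is the constant-weight generalization, and the same $\sigma^2$ and threshold carry over because $\log_2\binom{N}{K}\sim K\log_2 N$. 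An equivalent route is to substitute $\sigma^2=\frac{1}{2\rho\log_2 N}$ into the exact integral of Theorem~\ref{thm1} and evaluate it by Laplace/extreme-value asymptotics, but the conditioning argument exposes the mechanism more transparently.

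The main obstacle is getting the \emph{tight} constant $\ln 2$ rather than $2\ln 2$. A naive pairwise union bound $(N-1)Q(d_{\rm min}/2\sigma)$ over codeword pairs uses the half-distance in the exponent and yields only $\rho>2\ln 2$; the factor-two gain comes from the refined split above, in which the dominant error event keeps the signal observation near its mean (value $\approx 1$), so a competing noise sample must clear the \emph{full} level $\theta\approx 1$ rather than the midpoint. Equivalently, the argument rests on the concentration of the maximum of $N$ Gaussians at $\sigma\sqrt{2\ln N}=\sqrt{\ln 2/\rho}\,(1+o(1))$ together with the $K$-fold minimum of the signal observations staying above it; the only genuine care needed is the scaling $K=N^{o(1)}$ (e.g. $K$ fixed), which simultaneously forces $R\to 0$ (the power-limited regime) and keeps the signal-side term negligible.
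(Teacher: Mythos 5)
Your proposal is correct and follows essentially the same route as the paper's Appendix C: a threshold split of the error event at a fixed level strictly between $0$ and $1$ (your $\theta$ is the paper's $1-\delta$), union/Chernoff bounds on the signal-side and noise-side terms, the identity $\frac{1}{\sigma^2}=\frac{2}{K}\log_2\binom{N}{K}\frac{E_b}{N_0}$, and the observation that the two requirements on the threshold are compatible exactly when $E_b/N_0>\ln 2$. The only cosmetic difference is that you invoke $\log_2\binom{N}{K}\sim K\log_2 N$ directly, where the paper packages the same growth estimate as a sandwich-inequality lemma.
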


\begin{proof} 
See Appendix C.
\end{proof}

The practical implication of the theorem is that the Shannon limit in the power-limited regime is achievable even with a linear complexity decoding algorithm. In fact, biorthogonal codes with maximum-likelihood (ML) decoding also approach the ultimate Shannon limit on $E_b/N_0$ as $N\rightarrow \infty$, albeit with $R\rightarrow 0$. Using the Hadamard transformation, the ML decoding complexity of the biorthogonal codes can be implemented with order $\mathcal{O}(N\log N)$, which is super-linear in the block length \cite{Sloane}. Therefore, our code with a simple decoding scheme can attain unbounded gains in the decoding complexity as the blocklength increases compared to the biorthogonal code with the ML decoding. 

{\bf Remark 6 (Emerging applications):} Our codes can be very useful in emerging power-limited wireless communication scenarios such as terahertz and visible light communications systems, in which the signal bandwidth is enormous, but the propagation loss is very high compared to lower frequency bands, i.e., SNR is limited. For more details, see \cite{tera} and references therein. In these systems, the proposed codes with a large block length can be a practical coded-modulation scheme.

\subsection{Coding Gain Comparison}
   To gauge the code performance in a finite block length, it is instructive to compare the effective coding gain of the proposed OSS code with the biorthogonal code:
  \begin{itemize}
  	\item {\bf Biorthogonal code:}  For any integer $m\in \mathbb{Z}^{+}$, there exists a binary code with $[N,B,d_{\rm min}^H]=[2^{m-1},m,2^{m-2}]$. Therefore, the nominal coding gain of this code is 
  	\begin{align}
  		\gamma_{c}^{{\sf bi}}  =\frac{d^2_{\rm min}(\mathcal{C})/4}{E_s/R} =\frac{\log_2(N)+1}{2}.
  	\end{align}
  	The number of codewords at minimum distance from a given codeword  is $N_{d_{\rm min}}=2^m-2=2N-2$.  The effective coding gain is approximately computed using the rule 0.2-dB loss per factor of 2 for the normalized number of the nearest neighbor per bit $ N_{d_{\rm min}}/\log_2 N =(2N-2)/\log_2 N$ \cite{Forney1998}, which is given by
  		\begin{align}
  		\gamma_{{\sf eff}}^{{\sf bi}}  =	10\log_{10}\left(\frac{\log_2(N)+1}{2}\right)- 0.2\log_2\left(\frac{2N-2}{\log_2 N}\right)~~{\rm dB}.
  	\end{align}


  	
  	\item {\bf OSS code:} Suppose a single-layered orthogonal sparse superposition code with length $N$ and $K_1=1$.  Since $d_{\rm min}^2(\mathcal{C})=2$, $E_s=\frac{1}{N}$, and $R=\frac{\log_2N}{N}$, the nominal coding gain becomes  
  	\begin{align}
 	\gamma_{c}^{{\sf oss}} &=\frac{d^2_{\rm min}(\mathcal{C})/4}{E_s/R}=\frac{\log_2 N}{2}.
 \end{align} 
 	Since the normalized number of the nearest neighbors per bit is $\frac{N-1}{\log_2N}$, the effective coding gain of our code becomes
 	  		\begin{align}
  		\gamma_{{\sf eff}}^{{\sf oss}}  =	10\log_{10}\left(\frac{\log_2 N}{2}\right)- 0.2\log_2\left(\frac{N-1}{\log_2 N}\right)~~{\rm dB}.
  	\end{align}
 	
  \end{itemize}
 From this comparison, we can see that the two codes provide a similar effective coding gain for a finite block length, and the gains of the two codes are asymptotically identical when the block length is infinite. Notwithstanding the similarity, the two codes obtain the effective coding gains in different ways. On the one hand, the biorthogonal code enhances the coding gain by increasing the minimum distance of the codewords, while keeping the energy per bit a constant. On the other hand, our code improves the coding gain by diminishing the energy per bit when transmitting a codeword (i.e., increasing the sparsity of codewords) while maintaining the minimum distance.    
  
 We also compare the nominal and effective coding gains of our code with those of the existing binary linear codes including first-order Reed-Muller (RM) codes and Golay codes in \cite{Forney1998}, which is summarized in Table I. Although the coding gain of our code is slightly less than those of the other codes, all these coding gains are attainable when using the optimal ML decoding. When using sub-optimal (i.e., low-complexity) decoders for the other codes, the gains dwindle.  However, the effective coding gain of our code maintains with the simple linear complexity decoder.

\begin{table}
\caption{Coding gain comparison.}
\label{table:codinggain}
\begin{center}
\begin{tabular}{|c|c|c|c|}
\hline
Code & $\left[N,B, d_{\rm min}^{\sf H}\right] $              & $\gamma_c$ (in dB)   & $\gamma_{\rm eff}$ (in dB)    \\
\hline\hline
Proposed (Two-layered) &$[65,12,2]$ & 4.9   &  3.9  \\
RM & $[64,7,16] $  & 5.4   & 4.4   \\
Golay  &$[64,22,16] $ & 7.4   &  6.0  \\\hline
Proposed (Two-layered) &$[129,14,2] $ &     5.4  & 4.6\\
RM &$[128,8,64] $               & 6.0   &   4.9 \\
Golay &$[128,29,32] $               & 8.6   & 6.9  \\\hline
Proposed (Two-layered) &$[256,16,2] $ & 6.0   &  5.1  \\
RM &$[256,9,128] $              & 6.5   & 5.4   \\
Golay &$[256,37,64] $               & 9.7  & 7.6  \\ 
\hline\hline
\end{tabular} 
\end{center}
\end{table}

 \section{Numerical Results}
In this section, we provide numerical results to demonstrate the effectiveness of the proposed encoding and decoding methods.

\subsection{Validation of BLER Performance Analysis}

To validate our analysis, we compare our analytical expression of the BLER with the numerical results. As can be seen in Fig. \ref{figure:Fig2}, the derived expression perfectly matches to the simulation results in all SNR values, which confirms the exactness of our analysis in Theorem \ref{thm1}. In addition, we observe that our upper bound derived in Theorem \ref{thm2} is very tight for various block lengths as shown in Fig. \ref{figure:Fig3}. In particular, the derived upper bounds are useful to predict the performances of the codes correctly within 0.2 dB SNR loss around $\text{BLER}=10^{-5}$.

\begin{figure}[t]
	\centering 
   \includegraphics[width=0.51\textwidth]{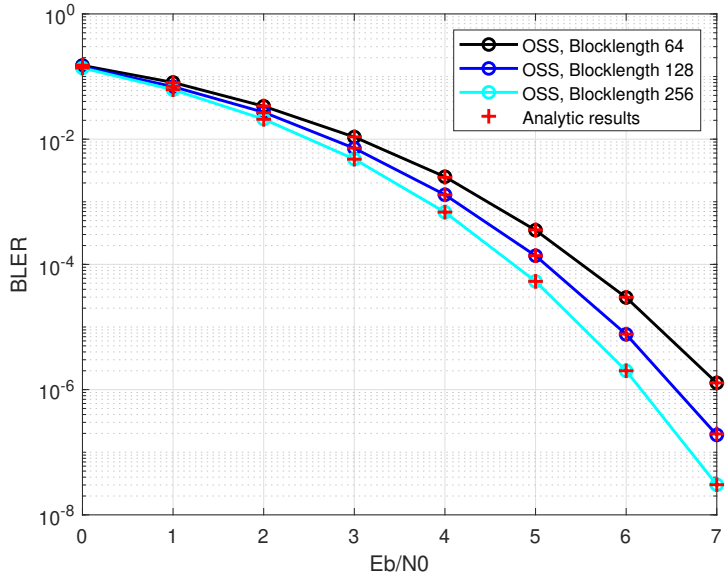}\vspace{-0.1cm}
  \caption{Comparison of the analytical expression of the BLER and the empirical results on BLER for single-layered orthogonal sparse superposition codes.}  \label{figure:Fig2}\vspace{-0.1cm}
\end{figure}

\begin{figure}[t]
	\centering 
   \includegraphics[width=0.51\textwidth]{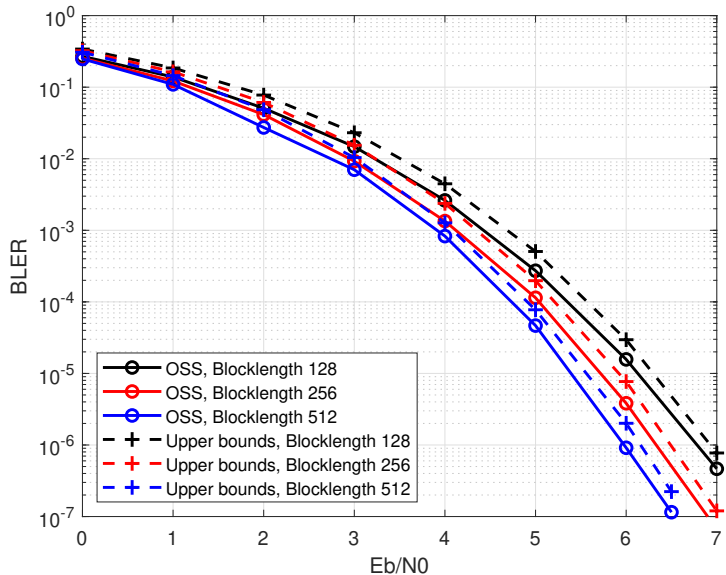}\vspace{-0.1cm}
  \caption{Comparison of the analytical expression of the BLER and the empirical results on BLER for symmetric two-layered orthogonal sparse superposition codes.}  \label{figure:Fig3}\vspace{-0.1cm}
\end{figure}

\subsection{BLER Comparison}
We compare the BLER performance of the OSS codes with that of the existing codes in terms of SNR per bit, i.e., $E_b/N_0$, which is typically used to fairly compare the BLERs for the codes with different rates. We consider a two-layered OSS code with block length $N=257$ and rate $R=\frac{16}{257}$ and adopt the simple ordered statistics decoder, which needs the complexity of $\mathcal{O}\left(N\right)$. For a fair comparison, we evaluate the BLER performances of the low-rate codes under the similar block length $N=256$ and the decoding complexity. The following two conventional low-rates codes are considered:
\begin{itemize}
	\item $[16,32]$-polar codes concatenated with a repetition code with rate $1/8$: this concatenated code provides an effective code rate of $R=\frac{16}{256}$.  We use the generator matrix of the polar code constructed by the  Arikan's kernal matrix and the information set in \cite{Polar}. For decoding, we use a successive cancellation list (SCL) decoder with list size of $L_p=8$. Therefore, the decoding complexity order becomes $\mathcal{O}\left(L_p\times N/8\log(N/8)\right)$. 
	\item $[16,48]$-convolutional codes concatenated with a repetition code with rate $1/4$; the effective code rate is $R=\frac{16}{(48+15)\times 4}$ where $15$ denotes the tail bits of the convolutional code We use the generator polynomial in octal is [54,64,74] with constraint length $K_c=4$. We adopt the soft-Viterbi decoder, which needs the complexity of $\mathcal{O}\left(2^{K_c}\times \log(N/4)\right)$.
\end{itemize} As can be seen in Fig. \ref{figure:Fig5}, the proposed method outperforms the existing low-rate codes under a similar decoding complexity. The conventional low-rate codes with the repetitions cannot attain sufficient coding gains while reducing the decoding complexity. Whereas, our code yields a relatively large coding gain with linear decoding complexity in the block length $N$. Consequently, the proposed code is more promising than the existing codes for URLLC and control channels in LTE systems. 
  


\begin{figure}[t]
	\centering 
   \includegraphics[width=0.53\textwidth]{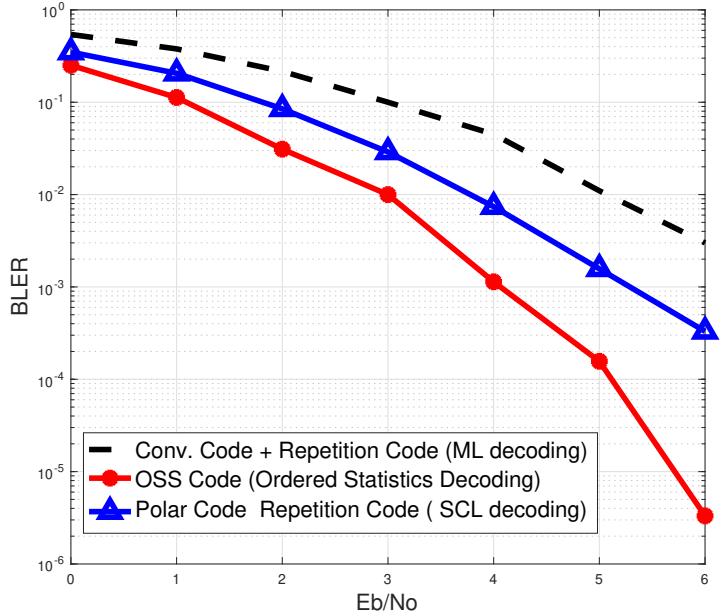}\vspace{-0.1cm}
  \caption{BLER performance comparison of different codes when $N=256$. }  \label{figure:Fig5}\vspace{-0.1cm}
\end{figure}

   
\begin{figure}[t]
	\centering 
   \includegraphics[width=0.53\textwidth]{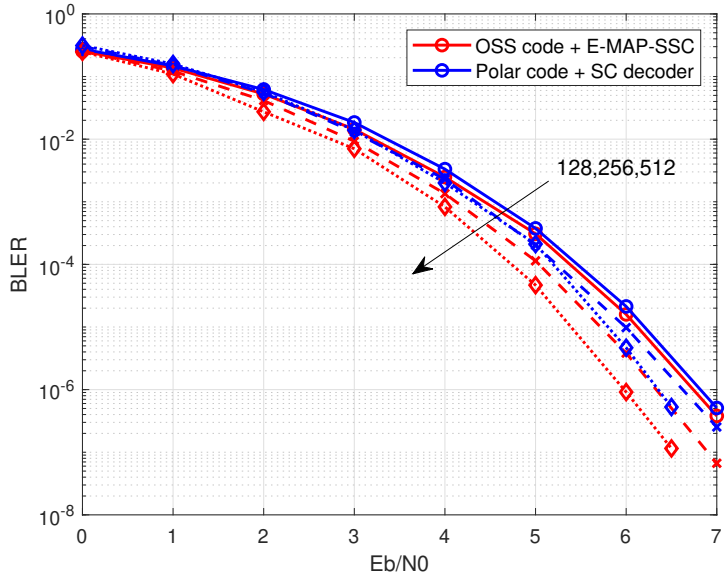}\vspace{-0.1cm}
  \caption{BLER performance comparison when the block length increases while decreasing code rates.}  \label{figure:Fig6}\vspace{-0.1cm}
\end{figure}

We compare the BLER performances between the OSS codes using E-MAP-SSC and the $1/2$-polar codes using SC decoders for different block lengths. As can be seen in Fig. \ref{figure:Fig6}, the proposed encoding and decoding method outperforms the polar codes using the SC decoder when the block lengths are less than five hundred. For example, our method provides approximately 0.4 dB gain over the polar code with the SC decoder at $\text{BLER}=10^{-4}$ when the code rate is $R=\frac{18}{512}$. In addition, with code rate decreasing inversely proportional to the block length, the SNR gap between the OSS and the polar codes increases. Nevertheless, the decoding complexity of the proposed method is much less than that of the SC decoder.

\subsection{Finite Blocklength Performance}

We also evaluate the performances of OSS codes in terms of finite-blocklength achievable rates introduced in \cite{Polyanskiy}. Using finite-block length channel codes, the achievable rates are far below the channel capacity. By allowing a small decoding error probability $\epsilon$, the finite-blocklength achievable rates are defined by 
\[\frac{\log M^{\star}(N,\epsilon)}{N},\]
where $M^{\star}(N,\epsilon)$ is the number of codewords using $N$ channel uses that can be transmitted with averaged BLER up to $\epsilon$. While the exact capacity in a finite blocklength regime is unknown, there exist several converse bounds in \cite{Shannon2}. Further, the finite blocklength achievable rate for length $N$ channel codes can be approximated as
\begin{align}
    \frac{\log M^{\star}(N,\epsilon)}{N}\approx C-\sqrt{\frac{V}{N}}Q^{-1}\left(\epsilon\right)+\frac{\log N}{2N}.
\end{align}
With these benchmarks, we present feasible OSS codes that can provide the highest coding rate for different code lengths. In particular, we compare the feasible coding rates of the OSS codes with E-MAP-SSC decoding to that of the polar codes with successive cancellation (SC) decoding, which needs decoding complexity order of $\mathcal{O}(N\log N)$.

\begin{figure}[t]
	\centering 
   \includegraphics[width=0.52\textwidth]{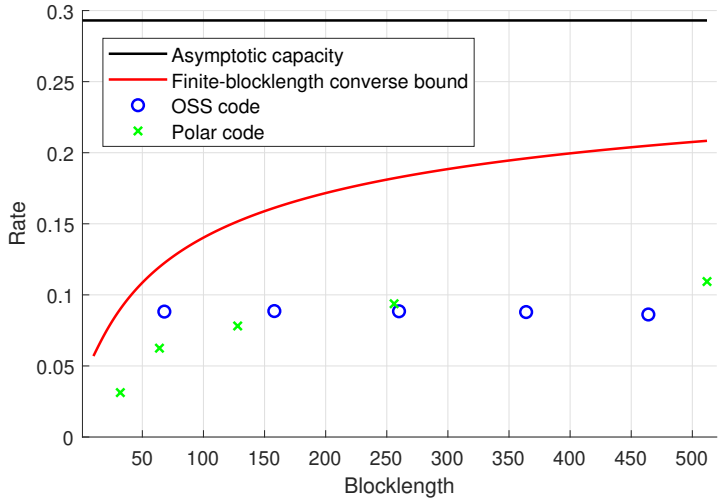}\vspace{-0.1cm}
  \caption{Finite length achievable rate comparison for $\text{SNR}=-3$dB and target error rate of $\epsilon=10^{-3}$.}  \label{figure:Fig4}\vspace{-0.1cm}
\end{figure}

Fig. \ref{figure:Fig4} provides the feasible code rates of different channel codes for a fixed SNR of $-3$dB and the target error tolerance of $10^{-3}$. In the figure, note that the results are slightly biased to the polar codes since the blocklengths of the OSS codes are compulsorily selected to meet the decoding error tolerance of $10^{-3}$. Nevertheless, the OSS codes can provide higher achievable rates than the polar codes even having shorter block lengths for $N<256$. Also, we can observe that the feasible OSS codes are very close to Shannon's converse bound. This result is quite interesting since the required decoding complexity of the E-MAP-SSC is significantly lower than that of the SC decoder, which requires $\mathcal{O}(N\log N)$.

\section{Conclusion}

This paper has introduced a new class of sparse superposition codes, called orthogonal sparse superposition (OSS) codes. To construct this type of codes, we have presented a novel encoding technique to generate codewords that are a sparse linear combination of orthogonal columns of a dictionary matrix. Harnessing the orthogonal structure, we also have proposed a near-optimal decoder with the linear decoding complexity in the block length. We have provided analytical expressions of the BLERs for the proposed codes as a function of code block lengths and rates. One key result is that our code with the linear complexity decoder can achieve the ultimate Shannon limit in the power-limited regime. In comparison with polar codes using the SC decoder and convolutional codes using the ML decoder, we have demonstrated that the proposed encoding and decoding techniques are more effective than the existing coded modulation techniques in the short block length regime.

 	
An important direction for future work would be to study the performances of orthogonal sparse superposition codes for fading channels. Another interesting direction for future work would be to generalize the code construction by relaxing the orthogonality principle.

\section*{Appendix}

\subsection{Proof of Theorem 1}

Before providing the proof, we first introduce a lemma, which provides probability distribution of the maximum and the minimum of $N$ IID random variables.  
\begin{lem} \label{lemma1}We let $\{X_n\}_{n=1}^N$ be a sequence of IID random variables, each with probability density function (PDF) $f_{X}(x)$ and cumulative density function (CDF) $F_{X}(x)$. We also denote the maximum and the minimum of the sequence by $X_{\sf max}=\max\left\{X_1,X_2,\ldots, X_N\right\}$ and $X_{\sf min}=\min\left\{X_1,X_2,\ldots, X_N\right\}$. Then, the distributions of $X_{\sf max}$ and $X_{\sf min}$ are
\begin{align}
    f_{X_{\sf max}}(x) &= N f_{X}(x) F_{X}(x)^{N-1},  \nonumber \\
     f_{X_{\sf min}}(x) &= N f_{X}(x) \left[1 -  F_{X}(x) \right]^{N-1}.
\end{align}
\end{lem}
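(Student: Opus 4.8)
The plan is to derive each density via the standard order-statistics route: first compute the cumulative distribution function (CDF) of the extreme order statistic using the independence of the $X_n$, then differentiate to recover the density. No machinery beyond the multiplication rule for independent events is needed, so I expect the argument to be short and self-contained.

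For $X_{\sf max}$, I would start from the event identity $\{X_{\sf max} \le x\} = \bigcap_{n=1}^N \{X_n \le x\}$, which holds because the sample maximum is at most $x$ exactly when every sample point is at most $x$. Invoking independence and the identical distribution of the $X_n$, this yields $F_{X_{\sf max}}(x) = \prod_{n=1}^N {\sf P}(X_n \le x) = F_X(x)^N$. Differentiating with respect to $x$ and applying the chain rule gives $f_{X_{\sf max}}(x) = N F_X(x)^{N-1} f_X(x)$, which is the claimed expression.

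For $X_{\sf min}$, I would instead work with the complementary event $\{X_{\sf min} > x\} = \bigcap_{n=1}^N \{X_n > x\}$, since the minimum exceeds $x$ precisely when all of the sample points do. Hence $F_{X_{\sf min}}(x) = 1 - {\sf P}(X_{\sf min} > x) = 1 - \prod_{n=1}^N {\sf P}(X_n > x) = 1 - [1 - F_X(x)]^N$. Differentiating and again using the chain rule produces $f_{X_{\sf min}}(x) = N [1 - F_X(x)]^{N-1} f_X(x)$.

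There is no genuine obstacle here; this is a textbook order-statistics fact. The only steps warranting a word of care are the two event identities $\{X_{\sf max}\le x\}=\bigcap_n\{X_n\le x\}$ and $\{X_{\sf min}>x\}=\bigcap_n\{X_n>x\}$, together with the tacit appeal to the fundamental theorem of calculus when passing from the CDF to the density, which presumes the existence of $f_X$ as granted in the hypothesis. These two extreme-value densities are exactly the ingredients needed in the sequel, where $X_{\sf max}$ and $X_{\sf min}$ model the largest and smallest received samples that the ordered-statistics decoder uses to identify the support sets of the two layers.
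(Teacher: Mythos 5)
Your derivation is correct and is exactly the standard argument the paper has in mind: the paper's own ``proof'' is a one-line remark that the result is immediate from the IID assumption, and your computation of $F_{X_{\sf max}}(x)=F_X(x)^N$ and $F_{X_{\sf min}}(x)=1-[1-F_X(x)]^N$ followed by differentiation simply fills in those omitted textbook details. No discrepancy in approach or substance.
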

\begin{proof}
The proof is direct from the IID assumption of $X_k$ for $k\in[N]$.	
\end{proof}

Now, we are ready to prove Theorem \ref{thm1}. Recall that the probability of error does not depend on which codeword is sent thanks to the symmetry of codewords. For ease of exposition, therefore, we focus on computing the probability of error when $	\mathcal{I}=\{1,2,\ldots,K_1\}$. For the AWGN channel, the received signal of the $n$th channel use becomes
\begin{align}
   Y_n = \begin{cases}
    1 +V_n&\text{ for } n\in [K]\\
     V_n           &\text{ for } n\notin [K],
          \end{cases} \nonumber
\end{align}
where $V_n$ is a Gaussian noise with zero-mean and variance $\sigma^2$. As explained in Remark 5, a simple ordered statistics decoder is the optimal decoder. Therefore, the probability of decoding error is computed as 
\begin{align}
	{\sf P}(\mathcal{E})& = 1- {\sf P}\left(\mathcal{E}^c \right)\nonumber\\
	&=1-{\sf P}\left( Y_{\mathcal{I}}^{\rm min} >  Y_{\mathcal{I}^c}^{\rm max} \right),
	\end{align}
where $Y_{\mathcal{I}}^{\rm min}=\min\left\{Y_n\right\}$ for $n\in \mathcal{I}$ and  $Y_{\mathcal{I}}^{\rm max}=\max\left\{Y_n\right\}$ for $n\in \mathcal{I}_c$. Using the conditional expectation theorem, we obtain
\begin{align}
&{\sf P}\left( Y_{\mathcal{I}}^{\rm min} >  Y_{\mathcal{I}^c}^{\rm max} \right) \nonumber\\
    &=  \mathbb{E}_{Y^{\sf {max}}_{\mathcal{I}^{c}}}\left[{\sf P}\left(Y_{\mathcal{I}}^{\rm min} > y\right)\mid   Y_{\mathcal{I}^c}^{\rm max}=y \right]\nonumber \\
              &=\int_{-\infty}^{\infty} {\sf P}\left(Y_{\mathcal{I}}^{\rm min} > y\right) f_{Y^{\sf {max}}_{\mathcal{I}^{c}}}(y) {\rm d}y, \nonumber \\
          &=  \int_{-\infty}^{\infty} {\sf P}\left(Y_{\mathcal{I}}^{\rm min} > y\right)(N-K_1)f_{Y_{\mathcal{I}^{c}}}(y)F_{Y_{\mathcal{I}^{c}}}(y)^{N-K_1-1} {\rm d}y, \nonumber \\          
          &=  \int_{-\infty}^{\infty} \!\! \left[1 - F_{Y_{\mathcal{I}}}\!\left(y\right) \right]^{K}\!\!  \left(N-K_1 \right) f_{Y_{\mathcal{I}^{c}}}(y)F_{Y_{\mathcal{I}^{c}}}(y)^{N-K_1-1} {\rm d}y. \label{eq:cor_prob_supp}
\end{align}
Since $Y_n$ is distributed as $\mathcal{N}(1,\sigma^2)$ for $n\in[K]$ and $\mathcal{N}(0,\sigma^2)$ for $n\notin[K]$. Therefore, invoking $F_{Y_{\mathcal{I}}}(y)=1-Q\left(\frac{y-1}{\sigma}\right)$ and $F_{Y_{\mathcal{I}^{c}}}(y)=1-Q\left(\frac{y}{\sigma}\right)$ into \eqref{eq:cor_prob_supp}, we obtain the expression in \eqref{eq:thm1}.

\subsection{Proof of Theorem 2}
To prove this, we consider a sub-optimal decoder with a two-stage decoding algorithm:
\begin{itemize}
    \item Stage 1: The decoder first identifies the $2K$ non-zero support elements that produce the largest magnitude values in ${\bf |Y|}$.
    \item Stage 2: The decoder assigns $1$ for the selected indices if its sign is positive. Otherwise, it allocates $-1$. \end{itemize}
Let ${\sf P}\left(\hat{{\mathcal E}}_{1}^{c}\right)$ and ${\sf P}\left(\hat{{\mathcal E}}_{2}^{c}\right)$ be the probability of the success in decoding at stage 1 and stage 2, respectively. By the above decoding rule, the probability of decoding the support set correctly at stage 1 is \begin{align}
    {\sf P}\left(\hat{{\mathcal E}}_{1}^{c}\right)={\sf P}\left( |Y_{\mathcal{I}}^{\rm min}|^2 >  |Y_{\mathcal{I}^c}^{\rm max}|^2 \right). \label{error1}
\end{align}
From Lemma \ref{lemma1}, we obtain
\begin{align}
&{\sf P}\left( |Y_{\mathcal{I}}^{\rm min}|^2 >  |Y_{\mathcal{I}^c}^{\rm max}|^2 \right) \nonumber\\
    &=  \!\mathbb{E}_{|Y^{\sf {max}}_{\mathcal{I}^{c}}|^2}\left[{\sf P}\left(|Y_{\mathcal{I}}^{\rm min}|^2 > y\right)\mid   |Y_{\mathcal{I}^c}^{\rm max}|^2=y \right]\nonumber \\
              &=\!\int_{0}^{\infty} \!\!\!{\sf P}\left(|Y_{\mathcal{I}}^{\rm min}|^2 \!> \!y\right) f_{|Y^{\sf {max}}_{\mathcal{I}^{c}}|^2}(y) {\rm d}y \nonumber \\
          &= \! \int_{0}^{\infty} \!\!\!{\sf P}\left(|Y_{\mathcal{I}}^{\rm min}|^2\! > \!y\right)\!(N\!-\!2K)f_{|Y_{\mathcal{I}^{c}}\!|^2}(y) F_{|Y_{\mathcal{I}^{c}}\!|^2}(y)^{N-2K-1} {\rm d}y \nonumber \\          
          &= \left(N\!-\!2K \right)\!\! \int_{0}^{\infty} \!\!\! \left[1 \!-\! F_{|Y_{\mathcal{I}}\!|^2}\!\left(y\right) \right]^{2K}\!\!  \! f_{|Y_{\mathcal{I}^{c}}\!|^2}(y) F_{|Y_{\mathcal{I}^{c}}\!|^2}(y)^{N-2K-1} {\rm d}y. \label{eq:cor_prob_supp}
\end{align}
Note that $|Y_{n}|^2$ for $n\in \mathcal{I}$ follows a scaled non-central chi-square distribution with one degree of freedom and the non-centrality parameter of $\frac{1}{\sigma^2}$, i.e., $\sigma^2{\chi}^2\left(1,\frac{1}{\sigma^2}\right)$. In addition, the distribution of $|Y_{n}|^2$ for $n\notin \mathcal{I}$ is the scaled chi-square with one degree of freedom, i.e., $\sigma^2\chi^2\left(1\right)$. As a result, the probability that correctly identifies the non-zero support set is 
\begin{align}
    &{\sf P}\left(\hat{{\mathcal E}}_{1}^{c}\right)=\nonumber\\&\frac{(N\!-\!2K)}{\sqrt{2\pi\sigma^2}}\!\!\int_{0}^{\infty} \!\!\!Q_{\frac{1}{2}}\!\left(\frac{1}{\sigma},\!\frac{\sqrt{y}}{\sigma}\right)^{\!2K}\!\!\left\{1\!-\!2Q\left(\frac{\sqrt{y}}{\sigma}\right)\right\}^{N\!-2K-1}\!\!y^{-\frac{1}{2}}e^{-\frac{y}{2\sigma^2}}{\rm d}y.
\end{align}
Under the premise that $\hat{\mathcal{I}}=\left\{\mathcal{I}_1\cup\mathcal{I}_2\right\}$, decoding is successful if the signs of $Y_{n}$ for $n\in \mathcal{I}$ are not flipped.  As a result, the probability that the signs of the $2K$ non-zero elements are not changed is  
\begin{align} 
    {\sf P}\left(\hat{{\mathcal E}}_{2}^{c}\right)=\left(1-Q\left(\frac{1}{\sigma}\right)\right)^{2K}.\label{error2}
\end{align}
Using \eqref{error1} and \eqref{error2}, the BLER of the sub-optimal decoder becomes
\begin{align}
    {\sf P}(\hat{{\mathcal E}})&=1-{\sf P}(\hat{{\mathcal E}}^{c}) \nonumber\\
    &=1-{\sf P}\left(\hat{{\mathcal E}}_{1}^{c}\right){\sf P}\left(\hat{{\mathcal E}}_{2}^{c}\right).
\end{align}
Since the BLER of the optimal decoder is upper bounded by that of this sub-optimal decoder, we completes the proof.

\subsection{Proof of Theorem 3}
In this proof, we aim to show that, with the simple our code and the ordered statistics decoder, the probability of  decoding error approaches zero as $N\rightarrow \infty$ and $R=\frac{ \log_2\left( {N\choose K}\right)}{N}\rightarrow 0$, provided that $E_b/N_o > \ln 2$ (1.53 dB).

 We commence by introducing the following  lemma, which is instrumental for proving our main theorem.
\begin{lem}\label{lem2}
For any $K\geq 1$ and $\beta>1$, 
\begin{align}
	\lim_{N \rightarrow \infty} \exp\left(-\frac{\beta}{K} \ln {N\choose K} +\ln(N-K)\right)  \rightarrow 0.
\end{align}
\end{lem}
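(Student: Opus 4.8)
The plan is to show that the argument of the exponential diverges to $-\infty$, so that the exponential itself vanishes. Writing the exponent as
\begin{align}
g(N) = -\frac{\beta}{K}\ln\binom{N}{K} + \ln(N-K),\nonumber
\end{align}
it suffices to prove that $g(N)\to -\infty$ as $N\to\infty$ for fixed $K\geq 1$ and $\beta>1$. The driving intuition is that, for fixed $K$, the binomial coefficient grows like $N^{K}$, so the negative term behaves like $-\beta\ln N$, which overwhelms the single $+\ln N$ coming from $\ln(N-K)$ precisely because $\beta>1$.

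To make this rigorous I would first bound the binomial coefficient from below. Since $\binom{N}{K}=\frac{N(N-1)\cdots(N-K+1)}{K!}\geq \frac{(N-K+1)^{K}}{K!}$, taking logarithms gives $\ln\binom{N}{K}\geq K\ln(N-K+1)-\ln K!$. Substituting this into $g(N)$, and using the monotonicity of the logarithm in the form $\ln(N-K)\leq \ln(N-K+1)$, I obtain
\begin{align}
g(N)\leq -\beta\ln(N-K+1)+\tfrac{\beta}{K}\ln K! + \ln(N-K+1) = -(\beta-1)\ln(N-K+1)+\tfrac{\beta}{K}\ln K!.\nonumber
\end{align}

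Because $\tfrac{\beta}{K}\ln K!$ is a constant independent of $N$ and $\beta-1>0$, the right-hand side tends to $-\infty$ as $N\to\infty$, hence $g(N)\to-\infty$ and the exponential tends to $0$, which is the claim. There is no genuine obstacle in this argument; the only point deserving a little care is to invoke an honest lower bound on $\binom{N}{K}$, valid for every finite $N$, rather than the asymptotic equivalence $\binom{N}{K}\sim N^{K}/K!$, so that the chain of inequalities is fully rigorous and not merely asymptotic. Everything else is a routine comparison of logarithmic growth rates.
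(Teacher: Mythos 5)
Your proof is correct and follows essentially the same route as the paper: bound $\binom{N}{K}$ from below by a degree-$K$ polynomial in $N$ divided by $K!$, so that the $-\beta\ln(\cdot)$ term dominates the single $+\ln(N-K)$ because $\beta>1$. The only cosmetic difference is that you work in log-space and use just the one-sided lower bound (which is all that is needed, since the quantity is positive), whereas the paper states a two-sided sandwich and invokes the squeeze theorem.
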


\begin{proof}
To prove this lemma, we use the following sandwich inequality:
\begin{align}
	\frac{(N-K)^K}{K!} \leq {N\choose K}  \leq \frac{N^K}{K!}.
\end{align}
By taking the exponent $-\frac{\beta}{K}$ and multiplying $N-K$, we obtain
\begin{align}
	(N-K) \frac{\left(K! \right)^{\frac{\beta}{K}}}{N^{\beta}} \leq  (N-K) {N\choose K}^{-\frac{\beta}{K}}  \leq  (N-K) \frac{\left(K! \right)^{\frac{\beta}{K}}}{(N-K)^{\beta}}.
\end{align}
For any $K\geq 1$ and $\beta>1$, both the upper and lower bound approach zero as $N$ goes to infinity, i.e., $\lim_{N\rightarrow \infty}(N-K) \frac{\left(K! \right)^{\frac{\beta}{K}}}{N^{\beta}}=0$ and $\lim_{N\rightarrow \infty} \frac{\left(K! \right)^{\frac{\beta}{K}}}{N^{\beta-1}}=0$. Therefore, by the squeeze theorem, we conclude that 
\begin{align}
	\lim_{N\rightarrow \infty}(N-K) {N\choose K}^{-\frac{\beta}{K}}\rightarrow 0,
\end{align}
which competes the proof. 
\end{proof}

  The probability error does not depend on which codeword was transmitted by the symmetry of orthogonal sparse superposition codes. Thus, we focus on an orthogonal sparse superposition codeword with length $N$ and nonzero support set $\mathcal{I}=\{1,2,\ldots, K\}$.  Using a simple ordered statistics decoder with linear decoding complexity, the codeword is correctly decodable if $\min_{i\in \mathcal{I}}\left\{Y_{i}\right\} >\max_{j\in [N]/\mathcal{I}}\left\{Y_{j}\right\}$. We define such event set as
\begin{align}
	\mathcal{E}^c=\left\{\min_{i\in \mathcal{I}}\left\{Y_{i}\right\} >\max_{j\in [N]/\mathcal{I}}\left\{Y_{j}\right\}\right\}.
\end{align}
Then, the decoding error probability is  \begin{align}
	{\sf P}(\mathcal{E})& = 1- {\sf P}\left(\mathcal{E}^c \right).
	\end{align}
To compute a lower bound for the probability of the correct decoding, we introduce a subset of $\mathcal{E}^c$ such that
\begin{align}
	\mathcal{E}^c_{\delta}=\left\{\min_{i\in \mathcal{I}}\left\{Y_{i}\right\} > 1-\delta\right\} \cap \left\{1-\delta > \max_{j\in [N]/\mathcal{I}}\left\{Y_{j}\right\}\right\}
\end{align}
for $0<\delta<1-\sqrt{\frac{2}{2+\epsilon}}$, where $\epsilon$ is an arbitrarily small positive constant. Using this subset, the lower bound of the correct decoding probability is computed as
\begin{align}
		{\sf P}\left({\mathcal E}^c \right) &\geq {\sf P}\left({\mathcal E}^c_{\delta}\right)\nonumber\\
	&= {\sf P}\left( \min_{i_k\in \mathcal{I}}\left\{Y_{i_k}\right\} > 1-\delta   \right){\sf P}\left(  1-\delta  > \max_{n\in [N]/\mathcal{I}}\left\{Y_{n}\right\} \right)\nonumber\\
	&\stackrel{(a)}{=} \left\{ {\sf P}\left(  Y_{i_k}  > 1-\delta   \right) \right\}^K \left\{ {\sf P}\left(   1-\delta  >  Y_n \right)\right\}^{N-K}\nonumber\\
	&\stackrel{(b)}{=} \left\{1-Q\left(\frac{\delta}{\sigma}\right)\right\}^{K} \left\{1-Q\left(\frac{1-\delta}{\sigma}\right)\right\}^{N-K}
\nonumber\\
&\stackrel{(c)}{\geq} \left\{1-KQ\left(\frac{\delta}{\sigma}\right)\right\} \left\{1-(N-K)Q\left(\frac{1-\delta}{\sigma}\right)\right\},\label{eq:low_cor_prob}
\end{align}
where (a) follows from the independence of $Y_n$ for $n\in [N]$ and (b) follows since $Y_{n}$ for $n\in [N]$ are Gaussian random variables. In addition, (c) follows from the Bernoulli's inequality, i.e., $(1-x)^r\geq 1-rx$ for every integer $r \geq 0$ and $x \geq 0$. Applying the Chernoff bound. i.e., $Q(x)\leq \exp\left(-\frac{x^2}{2}\right)$ for $x\geq 0$ in \eqref{eq:low_cor_prob}, we establish an upper bound for the probability of decoding error in a closed form for given code block length $N$, the sparsity level of the codeword $K$, and noise variance $\sigma^2$ as
\begin{align}
		{\sf P}\left({\mathcal E} \right) &\leq 1- \left\{1\!-\!K\exp\!\left(\!-\frac{\delta^2}{2\sigma^2}\!\right)\right\} \left\{1\!-\!(N\!-\!K)\!\exp\!\left(\!-\frac{(1\!-\!\delta)^2}{2\sigma^2}\!\right)\!\right\} \nonumber \\
		&=K\exp\left(-\frac{\delta^2}{2\sigma^2}\right) 	\!+\!(N\!-\!K)\exp\left(-\frac{(1\!-\!\delta)^2}{2\sigma^2}\!\right) \nonumber\\
		& ~~~- K(N-K)\exp\left(-\frac{\delta^2 +(1-\delta)^2}{2\sigma^2}\right).\label{eq:error_upper}
		\end{align} 
Now, we express this upper bound as a function of $E_b/N_o$. From the definition, i.e.,
\begin{align}
	\frac{E_{\rm b}}{N_{\rm o}} = \frac{1}{2R}\text{SNR},
\end{align}
we have
\begin{align}
	\frac{1}{\sigma^2} = \frac{2}{K}\log_2 {N\choose K}\frac{E_{\rm b}}{N_{\rm o}}, \label{eq:ebno_sigma}
	\end{align}
since the SNR is $\frac{K}{N\sigma^2}$ and the code rate is $R=\frac{\log_2\left({N\choose K}\right)}{N}$. Invoking \eqref{eq:ebno_sigma} into \eqref{eq:error_upper}, the decoding error probability is upper bounded by 
\begin{align}
	{\sf P}(\mathcal{E})  &\leq\exp\left(-\frac{\delta^2}{K\ln 2}\ln{N\choose K}\frac{E_{\rm b}}{N_{\rm o}} + \ln K\right) \nonumber\\
	& + \exp\left(-\frac{(1-\delta)^2}{K\ln(2)}\ln{N\choose K}\frac{E_{\rm b}}{N_{\rm o}} + \ln (N-K)\right) \nonumber\\
		& - \exp\left( -\frac{\delta^2 +(1-\delta)^2}{K\ln 2}\ln{N\choose K}\frac{E_{\rm b}}{N_{\rm o}} +\ln(K)+\ln (N-K)\right).\label{eq:error_upper2}
\end{align}
Notice that the first term in \eqref{eq:error_upper2} approaches zero as $N$ goes to infinity. Since the third term in \eqref{eq:error_upper2} is the multiplication of the first and the second term, to claim ${\sf P}(\mathcal{E})\rightarrow 0$, it is sufficient to show that the second in \eqref{eq:error_upper2} becomes zero as $N$ goes to infinity for any $\frac{E_{\rm b}}{N_{\rm o}}$ greater than $\ln 2$. For an arbitrary $\frac{E_{\rm b}}{N_{\rm o}}$ greater than $\ln 2$, we consider a sufficiently small positive value $\epsilon>0$ that satisfies
\begin{align}
	\frac{E_{\rm b}}{N_{\rm o}} =\ln2 +\frac{\ln2}{2}\epsilon.
\end{align}
By plugging this, we can rewrite the second term in \eqref{eq:error_upper2} as
\begin{align}
    \exp\left(-\frac{(1-\delta)^2}{K}\left(1+\frac{\epsilon}{2}\right)\ln{N\choose K} + \ln (N-K)\right).
\end{align}
From Lemma \ref{lem2}, this term approaches zero, provided that
\begin{align}
    (1-\delta)^2(1+\frac{\epsilon}{2})>1,
\end{align}
which is true due to the choice of $\delta$ as $0<\delta<1-\sqrt{\frac{2}{2+\epsilon}}$. Consequently, we conclude that
\begin{align}
	\lim_{N \rightarrow \infty} P(\mathcal{E}) \rightarrow 0,
\end{align}
for any $\frac{E_{\rm b}}{N_{\rm o}}>\ln2$, which completes the proof.


\end{document}